\newtheorem{theorem}{Theorem}
\newtheorem{lemma}{Lemma}
\newtheorem{corollary}{Corollary}
\newtheorem{definition}{Definition}
\newtheorem{proposition}{Proposition}
\newtheorem{example}{Example}
\DeclareMathOperator{\Tr}{Tr}
\DeclareMathOperator{\PT}{PT}
\DeclareMathOperator{\PPT}{PPT}
\DeclareMathOperator{\DEW}{DEW}
\newcommand{\ketbra}[2]{\ket{#1}\!\bra{#2}}
\begin{document}
\title{Bipartite quantum state discrimination and decomposable entanglement witness}
\author{Donghoon Ha}
\affiliation{Department of Applied Mathematics and Institute of Natural Sciences, Kyung Hee University, Yongin 17104, Republic of Korea}
\author{Jeong San Kim}
\email{freddie1@khu.ac.kr}
\affiliation{Department of Applied Mathematics and Institute of Natural Sciences, Kyung Hee University, Yongin 17104, Republic of Korea}

\begin{abstract}
We consider bipartite quantum state discrimination using positive-partial-transpose measurements and show that minimum-error discrimination by positive-partial-transpose measurements
is closely related to entanglement witness.
By using the concept of decomposable entanglement witness,
we establish conditions on minimum-error discrimination by positive-partial-transpose measurements. We also provide conditions on the upper bound of the maximum success probability over all possible positive-partial-transpose measurements.
Finally, we illustrate our results using examples of multidimensional bipartite quantum states.
\end{abstract}
\maketitle
\section{Introduction}
\indent Quantum nonlocality is a fascinating phenomenon that occurs in multipartite quantum systems\cite{horo2009,chid2013,brun2014}. 
In discriminating multipartite quantum states, quantum nonlocality occurs when 
optimal state discrimination, which gives the maximum success probability overall possible measurements, cannot be realized only by \emph{local operations and classical communication}(LOCC)\cite{chit20142,chef2000,berg2007,barn20091,bae2015}. 
Orthogonal quantum states can always be perfectly discriminated using appropriate global measurement. However, there exist some multipartite orthogonal quantum states that cannot be perfectly discriminated only by LOCC measurements\cite{benn19991,ghos2001}.
Moreover, some multipartite nonorthogonal quantum states cannot be optimally discriminated using only LOCC measurements\cite{pere1991,chit2013}.
Nonetheless, characterizing local discrimination of quantum states is still a hard task due to the lack of good mathematical structure for LOCC.\\
\indent The first nonlocal phenomenon in quantum state discrimination was shown through nine $3\otimes3$ pure orthogonal product states\cite{benn19991}.
On the other hand, it was shown that nonlocality does not occur in discriminating any two multipartite pure states\cite{walg2000,virm2001}.
Since then, there have been several studies focused on local indistinguishability or local distinguishability of mutually orthogonal states\cite{lu2010,akib2018,ha20222,cohe2023}.
In particular, the local distinguishability of two $2\otimes2$ pure orthogonal entangled states has been experimentally demonstrated\cite{lu2010}.
It was also shown that the nonlocality of the $N$-fold quantum state ensemble can disappear asymptotically\cite{akib2018}.
Recently, an upper bound of the maximum success probability overall separable measurements was established in the optimization for local unambiguous discrimination of multipartite quantum states\cite{ha20222}.
Moreover, necessary conditions for perfect discrimination by asymptotic LOCC were established in discriminating orthogonal pure states\cite{cohe2023}.\\
\indent The quantum nonlocal phenomenon also arises in the correlations of a multipartite quantum system. 
Entanglement is a well-known quantum correlation that cannot be realized using only LOCC\cite{horo2009}. The nonlocal property of entanglement can be used as a useful resource in various quantum information processing tasks such as quantum cryptography, teleportation, and local discrimination of multipartite quantum states\cite{eker1991,benn1993,chit2019,band2021}. 
For this reason, much attention has been shown for characterizing quantum entanglement\cite{amic2008,guhn2009,kett2020}.\\
\indent An important task of characterizing quantum entanglement is to design ways to detect the existence of entanglement. 
\emph{Entanglement witness}(EW) is an observable having non-negative mean value for any separable state, but negative for some entangled states\cite{horo1996,terh2000,lewe2000,chru2014}.
As EW detects the existence of entanglement that is an important quantum nonlocality, it is natural to ask whether EW can also be used to characterize the limit on local discrimination of quantum states, another important quantum nonlocality.\\
\indent Here, we provide an answer to the question by establishing a specific relation between the properties of EW and \emph{positive-partial-transpose}(PPT) measurements, a mathematically well-structured set of measurements having LOCC measurements as special cases.
By using the concept of decomposability of operators,
we show that the minimum-error discrimination of bipartite quantum states using PPT measurements strongly depends on the existence of \emph{decomposable entanglement witness}(DEW).
More precisely, we establish conditions on minimum-error discrimination by PPT measurements in terms of DEW. We also provide conditions on the maximum success probability over all possible PPT measurements. Finally, we illustrate our results using examples of multidimensional bipartite quantum states.\\
\indent Because PPT measurements have LOCC measurements as special cases, our results provide a useful method to detect the nonlocality arising in quantum state discrimination. 
Moreover, our results can be applied to any ensemble of bipartite states in an arbitrary dimension, whereas the previous results\cite{lu2010,akib2018,cohe2023} are only valid for some restricted cases with certain conditions.
We also note that our results provide a systematic way to construct a bipartite quantum state ensemble showing nonlocality in quantum state discrimination.\\
\indent This paper is organized as follows.
In Section~\ref{sec:pre},
we first recall the definitions and some properties about PPT measurements and DEW. We also recall the definition of minimum-error discrimination as well as some useful properties of the optimal measurements. 
In Section \ref{sec:ppq}, we provide conditions on the maximum success probability over all possible PPT measurements (Theorems~\ref{thm:pptq} and \ref{thm:mnsc}).
In Section~\ref{sec:mep},
we provide conditions on minimum-error discrimination by PPT measurements in terms of DEW(Theorems~\ref{thm:qmsc} and \ref{thm:qupb}). 
Finally, we illustrate our results using examples of multidimensional bipartite quantum states. In Section~\ref{sec:dis}, we summarize our results with possible applications and future works. We also discuss a systematic way to construct a bipartite quantum state ensemble showing nonlocality in quantum state discrimination.

\section{Preliminaries}\label{sec:pre}
\indent For a bipartite Hilbert space $\mathcal{H}=\mathbb{C}^{d_{1}}\otimes\mathbb{C}^{d_{2}}$,
let $\mathbb{H}$ be the set of all Hermitian operators acting on $\mathcal{H}$.
A bipartite quantum state is expressed by a density operator that is $\rho\in\mathbb{H}$ with positive semidefiniteness $\rho\succeq0$ and unit trace $\Tr\rho=1$.
A measurement is represented by a positive operator-valued measure that is $\{M_{i}\}_{i}\subseteq\mathbb{H}$ satisfying 
positive semidefiniteness $M_{i}\succeq0$ for all $i$ and the completeness relation $\sum_{i}M_{i}=\mathbbm{1}$, where $\mathbbm{1}$ is the identity operator in $\mathbb{H}$. 
When a measurement $\{M_{i}\}_{i}$ is performed 
on the quantum state $\rho$, the probability of obtaining the measurement outcome corresponding to $M_{j}$ is $\Tr(\rho M_{j})$.

\begin{definition}
$E\in\mathbb{H}$ is called \emph{PPT} if 
$E^{\PT}\succeq0$,
where the superscript $\PT$ is to indicate
the partial transposition \cite{pere1996,pptp}. 
Similarly, we say that $\{E_{i}\}_{i}\subseteq\mathbb{H}$ is \emph{PPT} if $E_{i}$ is PPT for all $i$. 
\end{definition}
\noindent We denote the set of all positive-semidefinite PPT operators in $\mathbb{H}$ as
\begin{equation}\label{def:pptpd}
\mathbb{PPT}_{+}=\{
E\in\mathbb{H}\,|\,E\succeq 0,~E^{\PT}\succeq 0\},
\end{equation}
and its dual set
as $\mathbb{PPT}_{+}^{*}$, that is,
\begin{equation}\label{eq:pptpsd}
\mathbb{PPT}_{+}^{*}=\{E\in\mathbb{H}\,|\,\Tr(EF)\geqslant0
~\forall F\in\mathbb{PPT}_{+}\}.
\end{equation}
\indent A measurement is called a \emph{LOCC measurement} if it can be implemented by LOCC, and a measurement $\{M_{i}\}_{i}$ is called a \emph{PPT measurement} if $M_{i}$ is PPT for all $i$. We note that every LOCC measurement is a PPT measurement \cite{chit20142}. 

\subsection{Decomposable entanglement witness}
\begin{definition}
$W\in\mathbb{H}$ is called \emph{decomposable} if 
it can be written as a sum of a positive-semidefinite operator and a PPT operator in $\mathbb{H}$, that is,
\begin{equation}\label{eq:wpq}
W=P+Q^{\PT}
\end{equation}
for some $P\succeq0$ and $Q\succeq0$.
\end{definition}

\indent The following proposition provides an
equivalent condition of the decomposability in Eq.~\eqref{eq:wpq} \cite{lewe2000,chru2014}.

\begin{proposition}\label{pro:wptp}
$W\in\mathbb{H}$ is decomposable
if and only if
\begin{equation}
W\in\mathbb{PPT}_{+}^{*}, 
\end{equation}
where $\mathbb{PPT}_{+}^{*}$ is defined in Eq.~\eqref{eq:pptpsd}.
\end{proposition}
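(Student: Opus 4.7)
The statement is a duality between two convex cones, so the natural strategy is: prove the easy direction directly from the definition, then use a separating-hyperplane argument for the converse. Throughout I would work in the finite-dimensional real Hilbert space $\mathbb{H}$ equipped with the Hilbert--Schmidt inner product $\langle A,B\rangle=\Tr(AB)$, and use that partial transposition is self-adjoint with respect to this inner product, i.e.\ $\Tr(A^{\PT}B)=\Tr(AB^{\PT})$.

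\textbf{Forward direction ($W$ decomposable $\Rightarrow$ $W\in\mathbb{PPT}_{+}^{*}$).} Suppose $W=P+Q^{\PT}$ with $P,Q\succeq0$, and fix any $F\in\mathbb{PPT}_{+}$. Then
\begin{equation*}
\Tr(WF)=\Tr(PF)+\Tr(Q^{\PT}F)=\Tr(PF)+\Tr(QF^{\PT}).
\end{equation*}
Both summands are nonnegative, since $P,F\succeq0$ and $Q,F^{\PT}\succeq0$ (the latter by definition of $\mathbb{PPT}_{+}$). Hence $\Tr(WF)\geqslant0$, giving $W\in\mathbb{PPT}_{+}^{*}$.

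\textbf{Converse direction ($W\in\mathbb{PPT}_{+}^{*}\Rightarrow$ $W$ decomposable).} Let $\mathbb{D}=\{P+Q^{\PT}\mid P,Q\succeq0\}$ be the cone of decomposable operators; it is a convex cone, and it is closed because it is the image of the closed cone of pairs $(P,Q)$ with $P,Q\succeq0$ in $\mathbb{H}\oplus\mathbb{H}$ under the continuous linear map $(P,Q)\mapsto P+Q^{\PT}$, which, being between finite-dimensional spaces and sending a closed cone to another cone sitting inside a finite-dimensional space, yields a closed image (a quick way to see closedness: the set of positive semidefinite operators has compact base, so $\mathbb{D}$ has a compact base, hence is closed). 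Now argue by contrapositive: suppose $W\notin\mathbb{D}$. By the separating hyperplane theorem applied to the closed convex cone $\mathbb{D}$ and the point $W$, there exists $F\in\mathbb{H}$ with $\Tr(WF)<0$ and $\Tr(XF)\geqslant0$ for every $X\in\mathbb{D}$. Taking $X=P$ with arbitrary $P\succeq0$ forces $F\succeq0$, and taking $X=Q^{\PT}$ with arbitrary $Q\succeq0$ gives $\Tr(QF^{\PT})\geqslant0$ for all $Q\succeq0$, hence $F^{\PT}\succeq0$. Therefore $F\in\mathbb{PPT}_{+}$, while $\Tr(WF)<0$ contradicts $W\in\mathbb{PPT}_{+}^{*}$.

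\textbf{Main obstacle.} The only nontrivial ingredient is the closedness of $\mathbb{D}$, which is needed to validate the separation argument; once that is in place the rest is essentially the bipolar theorem. I would spell this out carefully by exhibiting a compact base for $\mathbb{D}$ (for instance, intersecting with a hyperplane of the form $\Tr(\cdot)=1$ among operators with bounded trace norm), since a naive argument using only convexity would not suffice in infinite dimensions, but is automatic here in finite dimensions.
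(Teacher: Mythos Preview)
Your argument is correct and follows the standard cone-duality route. Note, however, that the paper does not supply its own proof of this proposition: it simply records the statement and cites the literature (Lewenstein \emph{et al.}\ and Chru\'sci\'nski--Sarbicki), so there is no in-paper proof to compare against. What you have written is exactly the kind of proof those references contain.

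One small caveat on the closedness step: the bare assertion that a continuous linear image of a closed cone is closed is \emph{not} true in general, even in finite dimensions, so your first parenthetical justification should be dropped. Your second justification via a compact base is the right idea; to make it airtight, observe that the linear map $(P,Q)\mapsto P+Q^{\PT}$ has kernel meeting the product cone $\{(P,Q):P,Q\succeq0\}$ only at the origin (if $P+Q^{\PT}=0$ then $\Tr P+\Tr Q=0$, forcing $P=Q=0$), which is precisely the hypothesis under which linear images of closed cones remain closed. Equivalently, $\{X\in\mathbb{D}:\Tr X=1\}$ is the continuous image of the compact set $\{(P,Q):P,Q\succeq0,\ \Tr P+\Tr Q=1\}$ and hence a compact base for $\mathbb{D}$. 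With that clarification your converse direction goes through cleanly.
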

\indent A positive-semidefinite operator $E\in\mathbb{H}$ is called \emph{separable} if it can be expressed as a sum of positive-semidefinite product operators, that is,
\begin{equation}
E=\sum_{l}A_{l}\otimes B_{l},
\end{equation}
where $A_{l}$ and $B_{l}$ are positive-semidefinite operators on $\mathbb{C}^{d_{1}}$ and $\mathbb{C}^{d_{2}}$ of $\mathcal{H}$, respectively. 
We denote the set of all positive-semidefinite \emph{separable} operators in $\mathbb{H}$ as
\begin{equation}\label{eq:sepdef}
\mathbb{SEP}=\{E\in\mathbb{H}\,|\, E\succeq0,~E:\mbox{separable}\}.
\end{equation}
We also denote the dual set of $\mathbb{SEP}$ as
\begin{equation}
\mathbb{SEP}^{*}=\{E\in\mathbb{H}\,|\,\Tr(EF)\geqslant0~\forall F\in\mathbb{SEP}\}.
\end{equation}
\begin{definition}\label{def:ew}
$W\in\mathbb{H}$ is called an \emph{EW} if it is in $\mathbb{SEP}^{*}$ but not positive semidefinite, that is,
\begin{equation}
W\in\mathbb{SEP}^{*},~W\not\succeq0.
\end{equation}
In particular, an EW $W\in\mathbb{H}$ is called a \emph{DEW} if
it is decomposable. 
\end{definition}

\begin{figure}[!tt]
\centerline{\includegraphics*[bb=0 0 530 525,scale=0.38]{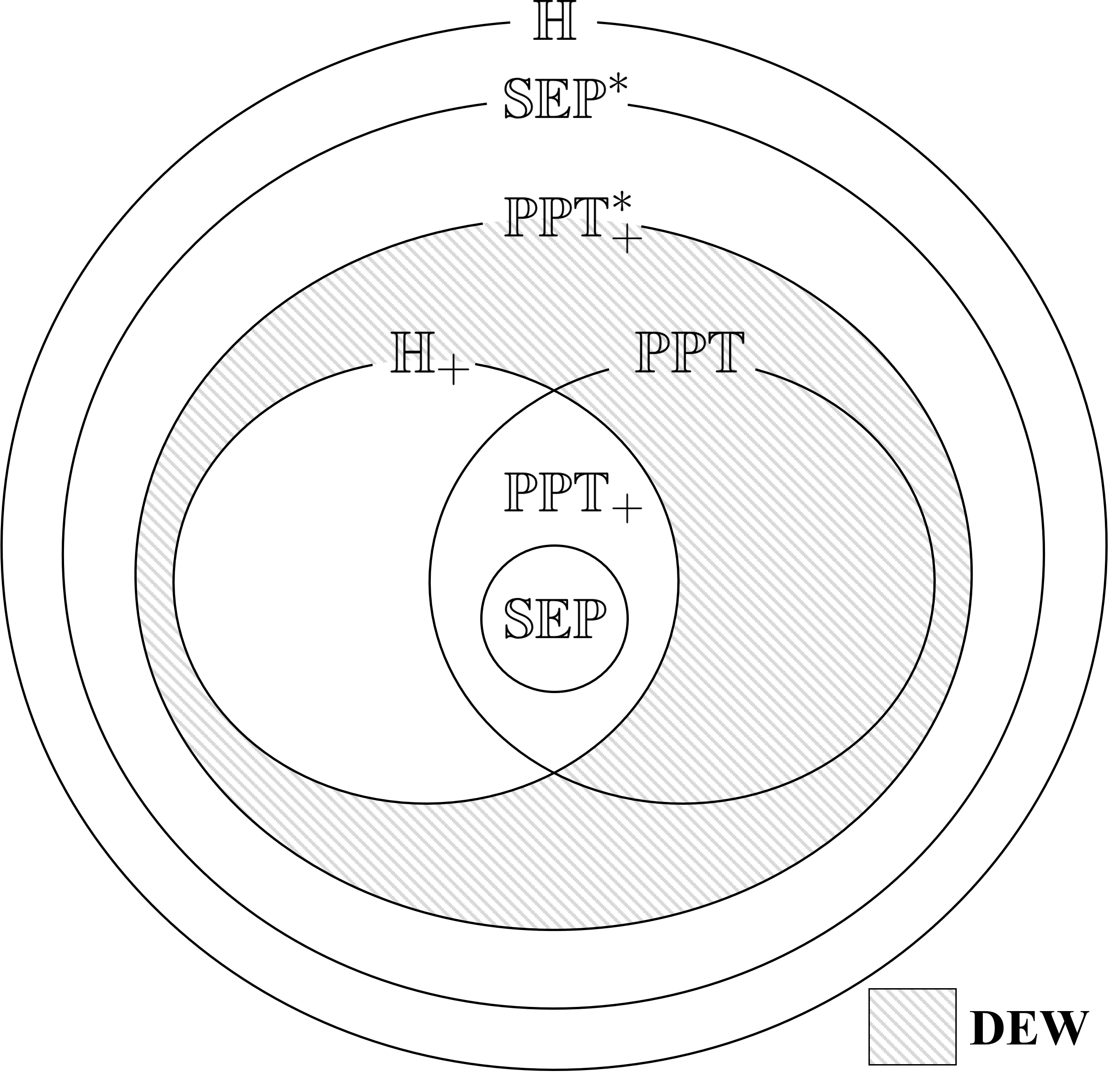}}
\caption{The relationship of the subsets of $\mathbb{H}$. 
$\mathbb{H}_{+}$ is the set of all positive-semidefinite operators and
$\mathbb{PPT}$ is the set of all PPT operators.
$\mathbb{PPT}_{+}$ is the intersection of $\mathbb{H}_{+}$ and $\mathbb{PPT}$.
The shaded area $\{W\in\mathbb{PPT}_{+}^{*}\,|\,W\notin\mathbb{H}_{+}\}$ is the set of all DEWs.
}\label{fig:inc}
\end{figure}

\indent From Proposition~\ref{pro:wptp}, we can see that
$W\in\mathbb{H}$ is a DEW if and only if
\begin{equation}
W\in\mathbb{PPT}_{+}^{*},~W\not\succeq0.
\end{equation}
We note that $\mathbb{PPT}_{+}^{*}\subseteq\mathbb{SEP}^{*}$ since $\mathbb{SEP}\subseteq\mathbb{PPT}_{+}$.
Figure~\ref{fig:inc} illustrates the relationship of the subsets of $\mathbb{H}$.

\subsection{Minimum-error discrimination of bipartite quantum states}
For a \emph{bipartite} quantum state ensemble,
\begin{equation}\label{eq:ens}
\mathcal{E}=\{\eta_{i},\rho_{i}\}_{i=1}^{n},
\end{equation}
where the state $\rho_{i}$ is prepared with the probability $\eta_{i}\in[0,1]$, let us consider the situation of discriminating the states from $\mathcal{E}$ using a measurement $\{M_{i}\}_{i=1}^{n}$. Here, the detection of $M_{i}$ means that 
we guess the prepared state as $\rho_{i}$.\\
\indent The \emph{minimum-error discrimination} of $\mathcal{E}$ is to achieve the optimal success probability,
\begin{equation}\label{eq:pgdef}
p_{\rm G}(\mathcal{E})=\max_{\rm measurement}\sum_{i=1}^{n}\eta_{i}\Tr(\rho_{i}M_{i}),
\end{equation}
where the maximum is taken over all possible measurements \cite{hels1969}.
We note that a measurement $\{M_{i}\}_{i=1}^{n}$ gives the optimal success probability $p_{\rm G}(\mathcal{E})$ if and only if it satisfies the following condition \cite{hole1974,yuen1975,barn20092,bae2013}:
\begin{equation}
\sum_{j=1}^{n}\eta_{j}\rho_{j}M_{j}-\eta_{i}\rho_{i}\succeq0~\forall i=1,\ldots,n.\label{eq:nscfme}
\end{equation}
\indent When the available measurements are limited to PPT measurements, 
we denote the maximum success probability by
\begin{equation}\label{eq:pptdef}
p_{\PPT}(\mathcal{E})=\max_{\substack{\rm PPT\\ \rm measurement}}\sum_{i=1}^{n}\eta_{i}\Tr(\rho_{i}M_{i}),
\end{equation}
where the maximum is taken over all possible PPT measurements.
Similarly, we denote 
\begin{equation}\label{eq:pldef}
p_{\rm L}(\mathcal{E})=\max_{\substack{\rm LOCC\\ \rm measurement}}\sum_{i=1}^{n}\eta_{i}\Tr(\rho_{i}M_{i}),
\end{equation}
where the maximum is taken over all possible LOCC measurements.\\
\indent Because $p_{\rm PPT}(\mathcal{E})$ is the maximum value over the set of all PPT measurements, which is a proper subset of the set of all measurements, $p_{\rm PPT}(\mathcal{E})$ is a lower bound of $p_{\rm G}(\mathcal{E})$.
Moreover, $p_{\rm L}(\mathcal{E})$ is a lower bound of $p_{\rm PPT}(\mathcal{E})$
since every LOCC measurement is a PPT measurement \cite{chit20142}. Thus, we have
\begin{equation}\label{eq:plptpg}
p_{\rm L}(\mathcal{E})\leqslant p_{\rm PPT}(\mathcal{E})\leqslant p_{\rm G}(\mathcal{E}).
\end{equation}
As $p_{\rm L}(\mathcal{E})$ and $p_{\PPT}(\mathcal{E})$ have the same objective function to maximize,
the first inequality in \eqref{eq:plptpg} becomes an equality
if and only if there exists
a LOCC measurement realizing $p_{\rm PPT}(\mathcal{E})$.
Similarly, 
the second inequality in \eqref{eq:plptpg} becomes an equality
if and only if there exists
a PPT measurement realizing $p_{\rm G}(\mathcal{E})$.

\section{PPT measurements and quantum state discrimination}\label{sec:ppq}
\indent In this section, we provide the first main result of our paper. We first consider an upper bound of $p_{\PPT}(\mathcal{E})$ in Eq.~\eqref{eq:pptdef}, and show that this upper bound is equal to $p_{\PPT}(\mathcal{E})$.
We further provide conditions on the maximum success probability over all possible PPT measurements.
The results will be used to obtain necessary and/or sufficient conditions for minimum-error discrimination by PPT measurements in the next section.\\
\indent For a bipartite quantum state ensemble $\mathcal{E}$ in Eq.~\eqref{eq:ens}, we define $\mathbb{H}_{\PPT}(\mathcal{E})$ as
\begin{equation}\label{eq:hpts}
\mathbb{H}_{\PPT}(\mathcal{E})=\{H\in\mathbb{H}\,|\, 
H-\eta_{i}\rho_{i}\in\mathbb{PPT}_{+}^{*}
~\mbox{for any}~i\in\{1,\ldots,n\}\},
\end{equation}
where $\mathbb{PPT}_{+}^{*}$ is defined in Eq.~\eqref{eq:pptpsd}.
In other words, $\mathbb{H}_{\PPT}(\mathcal{E})$ is the set of all $H$ such that $H-\eta_{i}\rho_{i}$ is decomposable in $\mathbb{H}$, for all $i=1,\ldots,n$. 
We further define
\begin{equation}\label{eq:hppte}
\mathbb{H}_{\DEW}(\mathcal{E})=\{H\in\mathbb{H}_{\PPT}(\mathcal{E})\,|\, H-\eta_{j}\rho_{j}
~\mbox{is a DEW}~\mbox{for some}~j\in\{1,\ldots,n\}\},
\end{equation}
that is, a subset of $\mathbb{H}_{\PPT}(\mathcal{E})$ satisfying $H-\eta_{j}\rho_{j}\not\succeq0$ for some $j\in\{1,\ldots,n\}$.
From the argument after Definition~\ref{def:ew},
we can see that 
\begin{equation}\label{eq:psmd}
H\in\mathbb{H}_{\PPT}(\mathcal{E})\setminus\mathbb{H}_{\DEW}(\mathcal{E})
\end{equation}
if and only if
\begin{equation}\label{eq:eqhpt}
H-\eta_{i}\rho_{i}\succeq0~\forall i=1,\ldots,n.
\end{equation}
\indent Now, let us consider the minimum quantity
\begin{equation}\label{eq:qptdef}
q_{\PPT}(\mathcal{E})=\min_{H\in\mathbb{H}_{\PPT}(\mathcal{E})}\Tr H,
\end{equation}
which is an upper bound of $p_{\rm PPT}(\mathcal{E})$\cite{cose2013}, that is,
\begin{equation}\label{eq:pptq}
p_{\rm PPT}(\mathcal{E})\leqslant q_{\PPT}(\mathcal{E}).
\end{equation}
The following theorem shows that 
$p_{\rm PPT}(\mathcal{E})$ in Eq.~\eqref{eq:pptdef}
is equal to 
$q_{\PPT}(\mathcal{E})$ in Eq.~\eqref{eq:qptdef}. The proof of Theorem~\ref{thm:pptq} is given in Appendix~\ref{app:thm1}.

\begin{theorem}\label{thm:pptq}
For a bipartite quantum state ensemble $\mathcal{E}=\{\eta_{i},\rho_{i}\}_{i=1}^{n}$, 
\begin{equation}\label{eq:ubppt}
p_{\rm PPT}(\mathcal{E})= q_{\PPT}(\mathcal{E}).
\end{equation}
\end{theorem}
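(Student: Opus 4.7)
The plan is to recognize $p_{\rm PPT}(\mathcal{E})$ as the optimum of a semidefinite program and exhibit $q_{\rm PPT}(\mathcal{E})$ as the value of its Lagrangian dual, so the equality follows from strong duality. The inequality $p_{\rm PPT}(\mathcal{E})\le q_{\rm PPT}(\mathcal{E})$ is already noted in \eqref{eq:pptq}, so only the reverse direction is new. (One can see this inequality directly: for any PPT measurement $\{M_i\}$ and any $H\in\mathbb{H}_{\rm PPT}(\mathcal{E})$, $\Tr H=\sum_i\Tr((H-\eta_i\rho_i)M_i)+\sum_i\eta_i\Tr(\rho_iM_i)\ge\sum_i\eta_i\Tr(\rho_iM_i)$ because $H-\eta_i\rho_i\in\mathbb{PPT}_+^*$ and $M_i\in\mathbb{PPT}_+$.)

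First I would write the primal as maximizing $\sum_{i=1}^n\eta_i\Tr(\rho_iM_i)$ over Hermitian $\{M_i\}$ subject to $M_i\succeq 0$, $M_i^{\rm PT}\succeq 0$, and $\sum_iM_i=\mathbbm{1}$. Introducing a Hermitian multiplier $H$ for the completeness relation and positive-semidefinite multipliers $Z_i,Y_i$ for the two positivity constraints, and using $\Tr(Y_iM_i^{\rm PT})=\Tr(Y_i^{\rm PT}M_i)$, the Lagrangian rearranges to
\begin{equation*}
L=\Tr H+\sum_i\Tr\bigl((\eta_i\rho_i+Z_i+Y_i^{\rm PT}-H)M_i\bigr).
\end{equation*}
The supremum over Hermitian $M_i$ is finite exactly when $H-\eta_i\rho_i=Z_i+Y_i^{\rm PT}$ for every $i$, in which case the value reduces to $\Tr H$.

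Since $Z_i,Y_i\succeq 0$, the surviving constraint is that $H-\eta_i\rho_i$ be decomposable for all $i$, which by Proposition~\ref{pro:wptp} is the same as $H-\eta_i\rho_i\in\mathbb{PPT}_+^*$. Thus the dual SDP is precisely $\min_{H\in\mathbb{H}_{\rm PPT}(\mathcal{E})}\Tr H=q_{\rm PPT}(\mathcal{E})$. To close the duality gap I would invoke Slater's condition: the uniform choice $M_i=\mathbbm{1}/n$ is strictly feasible, since $M_i\succ 0$, $M_i^{\rm PT}=\mathbbm{1}/n\succ 0$, and $\sum_iM_i=\mathbbm{1}$. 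Strong duality then yields $p_{\rm PPT}(\mathcal{E})=q_{\rm PPT}(\mathcal{E})$.

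The main obstacle is bookkeeping: carefully tracking which dual variable pairs with which primal constraint, correctly moving the partial transpose across the trace via $\Tr(XY^{\rm PT})=\Tr(X^{\rm PT}Y)$, and matching the constraint obtained after eliminating $Z_i$ and $Y_i$ with the dual-cone characterization of $\mathbb{PPT}_+^*$ built into $\mathbb{H}_{\rm PPT}(\mathcal{E})$. Once this algebraic identification is in place, the theorem follows from the standard SDP strong-duality theorem applied with the Slater point above.
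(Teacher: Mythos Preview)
Your argument is correct. Both proofs establish the same strong-duality statement, but by different machinery. The paper works directly from the separating hyperplane theorem: it builds the convex set
\[
\mathcal{S}(\mathcal{E})=\Big\{\big(\textstyle\sum_i\eta_i\Tr(\rho_iM_i)-p,\ \mathbbm{1}-\textstyle\sum_iM_i\big)\ :\ p>p_{\PPT}(\mathcal{E}),\ M_i\in\mathbb{PPT}_+\Big\},
\]
separates it from the origin to obtain a pair $(\gamma,\Gamma)$, and then argues case by case that $\gamma>0$ and that $H=\Gamma/\gamma$ lies in $\mathbb{H}_{\PPT}(\mathcal{E})$ with $\Tr H\le p_{\PPT}(\mathcal{E})$; along the way it needs the auxiliary Lemma that a nonzero element of $\mathbb{PPT}_+^*$ has strictly positive trace. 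You instead package the problem as a cone program, identify the dual feasible set with $\mathbb{H}_{\PPT}(\mathcal{E})$ via Proposition~\ref{pro:wptp}, and close the gap by Slater's condition at the interior point $M_i=\mathbbm{1}/n$. Your route is shorter and reuses off-the-shelf SDP strong duality (and, as a bonus, Slater on the primal automatically gives attainment of the dual minimum, matching the paper's use of ``$\min$'' in the definition of $q_{\PPT}$). The paper's route is more self-contained---it never invokes conic duality as a black box---at the cost of reproving that duality by hand for this particular cone.
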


\indent For a given ensemble $\mathcal{E}=\{\eta_{i},\rho_{i}\}_{i=1}^{n}$, the following theorem provides a necessary and sufficient condition for a PPT measurement $\{M_{i}\}_{i=1}^{n}$ and $H\in\mathbb{H}_{\PPT}(\mathcal{E})$ to realize $p_{\PPT}(\mathcal{E})$ and $q_{\PPT}(\mathcal{E})$, respectively.

\begin{theorem}\label{thm:mnsc}
For a bipartite quantum state ensemble $\mathcal{E}=\{\eta_{i},\rho_{i}\}_{i=1}^{n}$, a PPT measurement $\{M_{i}\}_{i=1}^{n}$ and $H\in\mathbb{H}_{\PPT}(\mathcal{E})$, $\{M_{i}\}_{i=1}^{n}$ realizes $p_{\PPT}(\mathcal{E})$ and $H$ provides $q_{\PPT}(\mathcal{E})$ if and only if
\begin{equation}\label{eq:comc}
\Tr[M_{i}(H-\eta_{i}\rho_{i})]=0~~\forall i=1,\ldots,n.
\end{equation}
\end{theorem}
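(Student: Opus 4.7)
The plan is to prove the theorem by a complementary-slackness argument, combining the elementary cone duality between $\mathbb{PPT}_{+}$ and $\mathbb{PPT}_{+}^{*}$ with Theorem~\ref{thm:pptq} as the only nontrivial input.

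First, I would establish the basic inequality. For any PPT measurement $\{M_{i}\}_{i=1}^{n}$ and any $H\in\mathbb{H}_{\PPT}(\mathcal{E})$, the completeness relation $\sum_{i=1}^{n}M_{i}=\mathbbm{1}$ gives
\begin{equation}
\Tr H-\sum_{i=1}^{n}\eta_{i}\Tr(\rho_{i}M_{i})=\sum_{i=1}^{n}\Tr[(H-\eta_{i}\rho_{i})M_{i}].
\end{equation}
Each $M_{i}$ belongs to $\mathbb{PPT}_{+}$ by the PPT-measurement assumption, while each $H-\eta_{i}\rho_{i}$ lies in $\mathbb{PPT}_{+}^{*}$ by the definition of $\mathbb{H}_{\PPT}(\mathcal{E})$ in Eq.~\eqref{eq:hpts}. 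Hence the defining property \eqref{eq:pptpsd} of the dual set forces every summand on the right-hand side to be nonnegative, yielding $\sum_{i}\eta_{i}\Tr(\rho_{i}M_{i})\leqslant \Tr H$, with equality if and only if $\Tr[(H-\eta_{i}\rho_{i})M_{i}]=0$ for every $i$.

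Next, I would combine this with Theorem~\ref{thm:pptq} to obtain the sandwich
\begin{equation}
\sum_{i=1}^{n}\eta_{i}\Tr(\rho_{i}M_{i})\leqslant p_{\PPT}(\mathcal{E})=q_{\PPT}(\mathcal{E})\leqslant\Tr H,
\end{equation}
and then read off both directions of the equivalence. If $\{M_{i}\}_{i=1}^{n}$ realizes $p_{\PPT}(\mathcal{E})$ and $H$ achieves $q_{\PPT}(\mathcal{E})$, then both outer inequalities saturate, so $\Tr H=\sum_{i}\eta_{i}\Tr(\rho_{i}M_{i})$, and the equality condition from the previous step gives \eqref{eq:comc}. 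Conversely, if \eqref{eq:comc} holds, then $\Tr H=\sum_{i}\eta_{i}\Tr(\rho_{i}M_{i})$, and the sandwich collapses, forcing each outer bound to be tight so that $\{M_{i}\}_{i=1}^{n}$ and $H$ are simultaneously optimal for their respective problems.

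Because every ingredient is complementary slackness applied to a cone duality that is already built into the definitions, I do not anticipate any serious obstacle. The only substantive input borrowed from elsewhere in the paper is the strong-duality identity $p_{\PPT}(\mathcal{E})=q_{\PPT}(\mathcal{E})$ of Theorem~\ref{thm:pptq}; without that identity, weak duality would yield only the ``only if'' direction and one would be stuck proving the converse.
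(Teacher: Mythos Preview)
Your proposal is correct and follows essentially the same approach as the paper's proof: both arguments use the identity $\Tr H-\sum_{i}\eta_{i}\Tr(\rho_{i}M_{i})=\sum_{i}\Tr[M_{i}(H-\eta_{i}\rho_{i})]$ from the completeness relation, the termwise nonnegativity coming from the cone duality $M_{i}\in\mathbb{PPT}_{+}$, $H-\eta_{i}\rho_{i}\in\mathbb{PPT}_{+}^{*}$, and the strong-duality identity of Theorem~\ref{thm:pptq} to force the sandwich to collapse in both directions. The only difference is organizational---you isolate the basic inequality with its equality condition up front and then apply the sandwich, whereas the paper treats the two implications separately---but the logical content is identical.
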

\begin{proof}
Let us suppose that $\{M\}_{i=1}^{n}$ and $H$ 
give $p_{\PPT}(\mathcal{E})$ and $q_{\PPT}(\mathcal{E})$, respectively.
From $M_{i}\in\mathbb{PPT}_{+}$ and $H-\eta_{i}\rho_{i}\in\mathbb{PPT}_{+}^{*}$ for all $i=1,\ldots,n$, we have
\begin{equation}\label{eq:mihige}
\Tr[M_{i}(H-\eta_{i}\rho_{i})]\geqslant0~\forall i=1,\ldots,n.
\end{equation}
\indent We note that
\begin{equation}\label{eq:stmh}
\sum_{i=1}^{n}\Tr[M_{i}(H-\eta_{i}\rho_{i})]
=\Tr H-\sum_{i=1}^{n}\eta_{i}\Tr(\rho_{i}M_{i})
=q_{\PPT}(\mathcal{E})-p_{\PPT}(\mathcal{E})=0,
\end{equation}
where the first equality is from $\sum_{i=1}^{n}M_{i}=\mathbbm{1}$,
the second equality is due to the assumption of $H$ and $\{M_{i}\}_{i=1}^{n}$, 
and the last equality is by Theorem~\ref{thm:pptq}. 
Inequality~\eqref{eq:mihige} and Eq.~\eqref{eq:stmh}
lead us to Condition~\eqref{eq:comc}.\\
\indent Conversely let us assume that 
$\{M_{i}\}_{i=1}^{n}$ and $H$ satisfy Condition~\eqref{eq:comc}.
This assumption implies
\begin{equation}\label{eq:qpthq}
q_{\PPT}(\mathcal{E})
=p_{\PPT}(\mathcal{E})
\geqslant\sum_{i=1}^{n}\eta_{i}\mathrm{Tr}(\rho_{i}M_{i})
=\sum_{i=1}^{n}\eta_{i}\mathrm{Tr}(\rho_{i}M_{i})
+\sum_{i=1}^{n}\mathrm{Tr}[M_{i}(H-\eta_{i}\rho_{i})]
=\mathrm{Tr}H\geqslant q_{\PPT}(\mathcal{E}),
\end{equation}
where the first equality follows from Theorem~\ref{thm:pptq},
the second equality is from Condition~\eqref{eq:comc},
the last equality is due to $\sum_{i=1}^{n}M_{i}=\mathbbm{1}$, 
and the first and second inequalities are from 
the definitions of $p_{\PPT}(\mathcal{E})$ and $q_{\PPT}(\mathcal{E})$, respectively.
Inequality~\eqref{eq:qpthq} leads us to 
\begin{equation}
\sum_{i=1}^{n}\eta_{i}\mathrm{Tr}(\rho_{i}M_{i})=p_{\PPT}(\mathcal{E}),~~
\Tr H=q_{\PPT}(\mathcal{E}). 
\end{equation}
Thus, $\{M_{i}\}_{i=1}^{n}$ and $H$ give $p_{\PPT}(\mathcal{E})$ and $q_{\PPT}(\mathcal{E})$, respectively.
\end{proof}
\indent We note that $H\in\mathbb{H}_{\PPT}(\mathcal{E})$ giving $q_{\PPT}(\mathcal{E})$ is generally not unique (see Example~\ref{ex:hnu} in Section~\ref{subsec:nsc}). 
However, the following corollary states the case that 
$H\in\mathbb{H}_{\PPT}(\mathcal{E})$ providing $q_{\PPT}(\mathcal{E})$ is unique.

\begin{corollary}\label{cor:exer1}
For a bipartite quantum state ensemble $\mathcal{E}=\{\eta_{i},\rho_{i}\}_{i=1}^{n}$, we have
\begin{equation}\label{eq:ppte1}
p_{\PPT}(\mathcal{E})=\eta_{1},
\end{equation}
if and only if 
\begin{equation}\label{eq:exppt}
\eta_{1}\rho_{1}-\eta_{i}\rho_{i}\in\mathbb{PPT}_{+}^{*}~~\forall i=2,\ldots,n.
\end{equation}
In this case, $\eta_{1}\rho_{1}$ is the only element of $\mathbb{H}_{\PPT}(\mathcal{E})$ providing $q_{\PPT}(\mathcal{E})$.
\end{corollary}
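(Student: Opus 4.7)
The plan is to prove both directions of the equivalence by combining Theorem~\ref{thm:pptq} (which identifies $p_{\PPT}(\mathcal{E})$ with $q_{\PPT}(\mathcal{E})$) with the concrete decomposable form $E=P+Q^{\PT}$, $P,Q\succeq 0$, granted by Proposition~\ref{pro:wptp} for every $E\in\mathbb{PPT}_{+}^{*}$.

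For the ($\Leftarrow$) direction, I would take $H:=\eta_{1}\rho_{1}$ as a candidate minimizer. The hypothesis gives $\eta_{1}\rho_{1}-\eta_{i}\rho_{i}\in\mathbb{PPT}_{+}^{*}$ for $i\geqslant 2$, and the $i=1$ case is trivial since $0\in\mathbb{PPT}_{+}^{*}$. Hence $H\in\mathbb{H}_{\PPT}(\mathcal{E})$, so $q_{\PPT}(\mathcal{E})\leqslant\Tr H=\eta_{1}$. Combined with the trivial lower bound $p_{\PPT}(\mathcal{E})\geqslant\eta_{1}$ obtained from the ``always guess $\rho_{1}$'' PPT measurement ($M_{1}=\mathbbm{1}$, $M_{i}=0$ for $i\geqslant 2$) and with Theorem~\ref{thm:pptq}, this forces $p_{\PPT}(\mathcal{E})=\eta_{1}$.

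For the ($\Rightarrow$) direction, assume $p_{\PPT}(\mathcal{E})=\eta_{1}$, so that by Theorem~\ref{thm:pptq} any minimizer $H^{*}\in\mathbb{H}_{\PPT}(\mathcal{E})$ satisfies $\Tr H^{*}=\eta_{1}$. From the membership $H^{*}-\eta_{1}\rho_{1}\in\mathbb{PPT}_{+}^{*}$ together with $\Tr(H^{*}-\eta_{1}\rho_{1})=0$, the key observation is that any trace-zero element of $\mathbb{PPT}_{+}^{*}$ must vanish: decomposing it as $P+Q^{\PT}$ with $P,Q\succeq 0$ (via Proposition~\ref{pro:wptp}) gives $0=\Tr P+\Tr Q$, so $P=Q=0$. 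Thus $H^{*}=\eta_{1}\rho_{1}$, which places $\eta_{1}\rho_{1}$ inside $\mathbb{H}_{\PPT}(\mathcal{E})$ and therefore yields $\eta_{1}\rho_{1}-\eta_{i}\rho_{i}\in\mathbb{PPT}_{+}^{*}$ for every $i\geqslant 2$. Since this argument shows that every minimizer equals $\eta_{1}\rho_{1}$, the uniqueness claim is obtained at the same time.

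The only mildly delicate step is the trace-zero argument used in ($\Rightarrow$): it is what converts abstract dual-cone membership in $\mathbb{PPT}_{+}^{*}$ into an actual equality of operators, and it is the place where Proposition~\ref{pro:wptp} is used essentially. Everything else is direct bookkeeping with the definition of $\mathbb{H}_{\PPT}(\mathcal{E})$ and the trivial PPT measurement $M_{1}=\mathbbm{1}$.
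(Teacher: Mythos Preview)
Your proof is correct and follows the same overall strategy as the paper (use the trivial PPT measurement $M_{1}=\mathbbm{1}$, $M_{i}=0$, and the candidate $H=\eta_{1}\rho_{1}$), but you streamline two technical steps. First, the paper invokes Theorem~\ref{thm:mnsc} in both directions to link the measurement and the minimizer, whereas you bypass it entirely: for ($\Leftarrow$) you use the direct sandwich $\eta_{1}\leqslant p_{\PPT}(\mathcal{E})=q_{\PPT}(\mathcal{E})\leqslant\Tr H=\eta_{1}$, and for ($\Rightarrow$) you read off $\Tr H^{*}=q_{\PPT}(\mathcal{E})=\eta_{1}$ directly from Theorem~\ref{thm:pptq}. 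Second, for the crucial ``trace-zero element of $\mathbb{PPT}_{+}^{*}$ vanishes'' step, the paper appeals to its Lemma~\ref{lem:ppti} (proved in the appendix via the fact that $\mathbb{SEP}$ spans $\mathbb{H}$), while you extract it more directly from Proposition~\ref{pro:wptp}: writing $E=P+Q^{\PT}$ with $P,Q\succeq0$ gives $0=\Tr E=\Tr P+\Tr Q$, hence $P=Q=0$. Your route is a bit more economical and self-contained; the paper's route has the minor advantage of packaging the trace-zero fact as a reusable lemma independent of the decomposition in Proposition~\ref{pro:wptp}.
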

\begin{proof}
Let $\{M_{i}\}_{i=1}^{n}$ be the measurement that $M_{1}=\mathbbm{1}$ and $M_{2},\ldots,M_{n}$ are the zero operator in $\mathbb{H}$. 
We first assume Eq.~\eqref{eq:ppte1} and
consider
$H\in\mathbb{H}_{\PPT}(\mathcal{E})$ providing $q_{\PPT}(\mathcal{E})$.
Since $\{M_{i}\}_{i=1}^{n}$ is obviously a PPT measurement giving $p_{\PPT}(\mathcal{E})$,
it follows from Theorem~\ref{thm:mnsc} that 
$\Tr(H-\eta_{1}\rho_{1})=0$.
From $H-\eta_{1}\rho_{1}\in\mathbb{PPT}_{+}^{*}$ and Lemma~\ref{lem:ppti} in Appendix~\ref{app:thm1}, we have $H=\eta_{1}\rho_{1}$.
Thus, $H\in\mathbb{H}_{\PPT}(\mathcal{E})$ 
together with the definition of $\mathbb{H}_{\PPT}(\mathcal{E})$ 
leads us to Condition~\eqref{eq:exppt}.\\
\indent Conversely, let us suppose Condition~\eqref{eq:exppt} and
consider $H=\eta_{1}\rho_{1}$. Condition~\eqref{eq:exppt} implies $H\in\mathbb{H}_{\PPT}(\mathcal{E})$.
The PPT measurement $\{M_{i}\}_{i=1}^{n}$ and $H\in\mathbb{H}_{\PPT}(\mathcal{E})$ satisfy Condition~\eqref{eq:comc}. Therefore, we have
\begin{equation}
p_{\PPT}(\mathcal{E})=q_{\PPT}(\mathcal{E})=\Tr H=\eta_{1},
\end{equation}
where the first equality is from Theorem~\ref{thm:pptq}
and the second equality follows from Theorem~\ref{thm:mnsc}.
\end{proof}
\indent When Eq.~\eqref{eq:ppte1} of Corollary~\ref{cor:exer1} holds, the maximum success probability $p_{\PPT}(\mathcal{E})$ can be achieved without the help of measurement,
simply by guessing $\rho_{1}$ is prepared.
As we can check in the proof of Corollary~\ref{cor:exer1}, the choice of $\rho_{1}$ in Corollary~\ref{cor:exer1} can be arbitrary. That is, any of $\{\rho_{i}\}_{i=1}^{n}$ can be used to play the role of $\rho_{1}$ in Corollary~\ref{cor:exer1}.

\section{Minimum-error discrimination by PPT measurements}\label{sec:mep}
\indent In this section, we provide another main result of our paper showing that the minimum-error discrimination of bipartite quantum states using PPT measurements strongly depends on the existence of DEW. More precisely, we establish necessary and/or sufficient conditions for minimum-error discrimination by PPT measurements in terms of DEW.\\
\indent For a bipartite quantum state ensemble $\mathcal{E}$ in Eq.~\eqref{eq:ens}, the minimum-error discrimination can be realized by PPT measurements if and only if
\begin{equation}\label{eq:inptpg}
p_{\PPT}(\mathcal{E})=p_{\rm G}(\mathcal{E}),
\end{equation}
where $p_{\rm G}(\mathcal{E})$ and $p_{\PPT}(\mathcal{E})$ are defined in Eqs.~\eqref{eq:pgdef} and \eqref{eq:pptdef}, respectively. 
Here, we provide necessary and/or sufficient conditions for Eq.~\eqref{eq:inptpg} in terms of DEW.

\subsection{Necessary condition for $p_{\PPT}(\mathcal{E})=p_{\rm G}(\mathcal{E})$}

\begin{theorem}\label{thm:qmsc}
For a bipartite quantum state ensemble $\mathcal{E}=\{\eta_{i},\rho_{i}\}_{i=1}^{n}$,
if $p_{\PPT}(\mathcal{E})=p_{\rm G}(\mathcal{E})$, then
there does not exist PPT measurement $\{M_{i}\}_{i=1}^{n}$ satisfying
\begin{equation}\label{eq:dewc}
\sum_{i=1}^{n}\eta_{i}\rho_{i}M_{i}\in\mathbb{H}_{\DEW}(\mathcal{E}),
\end{equation}
where $\mathbb{H}_{\DEW}(\mathcal{E})$ is defined in Eq.~\eqref{eq:hppte}.
Moreover, if $\{M_{i}\}_{i=1}^{n}$ is a PPT measurement satisfying Condition~\eqref{eq:dewc}, we have
\begin{equation}\label{eq:ppes}
p_{\PPT}(\mathcal{E})=\sum_{i=1}^{n}\eta_{i}\Tr(\rho_{i}M_{i}).
\end{equation}
\end{theorem}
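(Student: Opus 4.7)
The plan is to establish the ``moreover'' clause first, since it also powers the main claim, and then deduce the main claim from it by a one-line contradiction argument.

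For the moreover part, I would set $H=\sum_{j=1}^{n}\eta_{j}\rho_{j}M_{j}$ and exploit two observations simultaneously. On the one hand, the hypothesis $H\in\mathbb{H}_{\DEW}(\mathcal{E})\subseteq\mathbb{H}_{\PPT}(\mathcal{E})$ gives $H-\eta_{i}\rho_{i}\in\mathbb{PPT}_{+}^{*}$ for every $i$, so together with $M_{i}\in\mathbb{PPT}_{+}$ the duality in Eq.~\eqref{eq:pptpsd} yields the termwise inequality $\Tr[M_{i}(H-\eta_{i}\rho_{i})]\geqslant 0$. On the other hand, using $\sum_{i}M_{i}=\mathbbm{1}$ and the very definition of $H$, the sum telescopes: $\sum_{i=1}^{n}\Tr[M_{i}(H-\eta_{i}\rho_{i})]=\Tr H-\sum_{i=1}^{n}\eta_{i}\Tr(\rho_{i}M_{i})=0$. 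A nonnegative collection summing to zero must vanish identically, which is precisely Condition~\eqref{eq:comc} of Theorem~\ref{thm:mnsc}. Invoking that theorem yields $p_{\PPT}(\mathcal{E})=\sum_{i}\eta_{i}\Tr(\rho_{i}M_{i})$ (and incidentally $q_{\PPT}(\mathcal{E})=\Tr H$), which is Eq.~\eqref{eq:ppes}.

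For the main statement I would argue by contradiction. Assume $p_{\PPT}(\mathcal{E})=p_{\rm G}(\mathcal{E})$ and suppose some PPT measurement $\{M_{i}\}_{i=1}^{n}$ satisfies Condition~\eqref{eq:dewc}. By the moreover part just proved, this measurement attains $p_{\PPT}(\mathcal{E})$, hence by the assumption it also attains $p_{\rm G}(\mathcal{E})$. The global optimality condition in Eq.~\eqref{eq:nscfme} then forces $H-\eta_{i}\rho_{i}=\sum_{j}\eta_{j}\rho_{j}M_{j}-\eta_{i}\rho_{i}\succeq 0$ for every $i$. By the equivalence of Eqs.~\eqref{eq:psmd} and \eqref{eq:eqhpt}, this places $H$ in $\mathbb{H}_{\PPT}(\mathcal{E})\setminus\mathbb{H}_{\DEW}(\mathcal{E})$, flatly contradicting $H\in\mathbb{H}_{\DEW}(\mathcal{E})$.

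The main obstacle is the termwise nonnegativity $\Tr[M_{i}(H-\eta_{i}\rho_{i})]\geqslant 0$; this is the linchpin that converts an aggregate ``sum-to-zero'' identity into the complementary-slackness condition required by Theorem~\ref{thm:mnsc}. It hinges on reading $H=\sum_{j}\eta_{j}\rho_{j}M_{j}$ simultaneously in two ways: as an element of $\mathbb{H}_{\PPT}(\mathcal{E})$ (so every shift $H-\eta_{i}\rho_{i}$ lies in $\mathbb{PPT}_{+}^{*}$, which via the PPT/PPT$^{*}$ duality pairs nonnegatively with any PPT effect $M_{i}$) and as the quantity whose trace equals the objective value (which makes the telescoping identity exact). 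Once that double reading is in place, the remainder of the proof is bookkeeping: one appeal to Theorem~\ref{thm:mnsc} for the moreover part, and one appeal to the global optimality condition \eqref{eq:nscfme} together with the characterization of $\mathbb{H}_{\PPT}(\mathcal{E})\setminus\mathbb{H}_{\DEW}(\mathcal{E})$ for the contradiction.
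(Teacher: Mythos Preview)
Your proof is correct and follows essentially the same architecture as the paper's: establish the ``moreover'' clause first, then derive the main claim by contradiction via the global optimality condition~\eqref{eq:nscfme}. The only cosmetic difference is in the ``moreover'' part: the paper squeezes directly using Theorem~\ref{thm:pptq}, writing $\sum_{i}\eta_{i}\Tr(\rho_{i}M_{i})\leqslant p_{\PPT}(\mathcal{E})=q_{\PPT}(\mathcal{E})\leqslant\Tr H=\sum_{i}\eta_{i}\Tr(\rho_{i}M_{i})$, whereas you first extract the complementary-slackness condition~\eqref{eq:comc} from the termwise nonnegativity plus sum-to-zero identity and then invoke Theorem~\ref{thm:mnsc} (whose own proof is precisely that squeeze). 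Both routes rest on the same two facts---$H\in\mathbb{H}_{\PPT}(\mathcal{E})$ and $\Tr H=\sum_{i}\eta_{i}\Tr(\rho_{i}M_{i})$---so there is no substantive difference.
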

\begin{proof}
Let us suppose $\{M_{i}\}_{i=1}^{n}$ is a PPT measurement satisfying Condition~\eqref{eq:dewc}, and consider
\begin{equation}
H=\sum_{i=1}^{n}\eta_{i}\rho_{i}M_{i}.
\end{equation}
Equation~\eqref{eq:ppes} holds because
\begin{equation}
\sum_{i=1}^{n}\eta_{i}\Tr(\rho_{i}M_{i})
\leqslant p_{\PPT}(\mathcal{E})=q_{\PPT}(\mathcal{E})
\leqslant \Tr H=\sum_{i=1}^{n}\eta_{i}\Tr(\rho_{i}M_{i}),
\end{equation}
where the first and second inequalities follow from the definitions of $p_{\PPT}(\mathcal{E})$ and $q_{\PPT}(\mathcal{E})$, respectively,
the first equality is due to Theorem~\ref{thm:pptq}, and the second equality is from the definition of $H$. This proves the second statement of our theorem.\\
\indent Now, assume $p_{\PPT}(\mathcal{E})=p_{\rm G}(\mathcal{E})$.
If there exists a PPT measurement $\{M_{i}'\}_{i=1}^{n}$ satisfying Condition~\eqref{eq:dewc},
the second statement of our theorem implies that $\{M_{i}'\}_{i=1}^{n}$
gives the optimal success probability $p_{\rm G}(\mathcal{E})$ due to Eq.~\eqref{eq:ppes}.
From the optimality condition in Eq.~\eqref{eq:nscfme},
we have
\begin{equation}
\sum_{j=1}^{n}\eta_{j}\rho_{j}M_{j}'-\eta_{i}\rho_{i}
\succeq0~~\forall i=1,\ldots,n, 
\end{equation}
which contradicts Condition~\eqref{eq:dewc}.
Thus, there does not exist PPT measurement satisfying Condition~\eqref{eq:dewc}.
This proves the first statement of our theorem.
\end{proof}
\begin{example}\label{ex:eonc}
For any integer $d\geqslant 2$, let us consider the two-qudit state ensemble $\mathcal{E}=\{\eta_{i,j}^{(k,l)},\rho_{i,j}^{(k,l)}\}_{i,j,k,l}$ consisting of $2d(d-1)$ mixed states with equal prior probabilities,
\begin{alignat}{3}
&\eta_{i,j}^{(k,l)}=\frac{1}{2d(d-1)},~
\rho_{i,j}^{(k,l)}=\frac{\lambda}{3}(\Psi_{i,j}^{(k)}+\Pi_{i,j}^{(l)})+(1-\lambda)\sigma,\nonumber\\
&i,j\in\{0,1,\ldots,d-1\}~\mbox{with}~i<j,~~k,l\in\{1,2\},\label{eq:exes}
\end{alignat}
where $\sigma$ is an arbitrary two-qudit state, $0<\lambda\leqslant 1$, and
\begin{alignat}{3}
\Psi_{i,j}^{(1)}=\frac{1}{2}(\ket{ii}+\ket{jj})(\bra{ii}+\bra{jj}),&~
\Pi_{i,j}^{(1)}=\ketbra{ii}{ii}+\ketbra{jj}{jj},\nonumber\\
\Psi_{i,j}^{(2)}=\frac{1}{2}(\ket{ii}-\ket{jj})(\bra{ii}-\bra{jj}),&~
\Pi_{i,j}^{(2)}=\ketbra{ij}{ij}+\ketbra{ji}{ji}.
\label{eq:psijk}
\end{alignat}
\end{example}
For a PPT measurement $\{M_{i,j}^{(k,l)}\}_{i,j,k,l}$ with
\begin{equation}\label{eq:mij}
M_{i,j}^{(1,1)}=M_{i,j}^{(2,1)}=\frac{1}{2(d-1)}\Pi_{i,j}^{(1)},~
M_{i,j}^{(1,2)}=M_{i,j}^{(2,2)}=\frac{1}{2}\Pi_{i,j}^{(2)},
\end{equation}
we show that Condition~\eqref{eq:dewc} holds with respect to the ensemble in Eq.~\eqref{eq:exes} and the measurement in Eq.~\eqref{eq:mij}.
It is straightforward to verify that
\begin{alignat}{3}
\Big[\sum_{i',j',k',l'}\eta_{i',j'}^{(k',l')}\rho_{i',j'}^{(k',l')}M_{i',j'}^{(k',l')}-\eta_{i,j}^{(k,1)}\rho_{i,j}^{(k,1)}\Big]^{\PT}
&=\frac{\lambda}{4d(d-1)}\Big[\Psi_{i,j}^{(5-k)}
+\frac{1}{3}\Psi_{i,j}^{(k+2)}
+\hat{\Pi}_{i,j}^{(1)}
+\frac{2}{3}\hat{\Pi}_{i,j}^{(2)}\Big]\succeq0,
\nonumber\\
\sum_{i',j',k',l'}\eta_{i',j'}^{(k',l')}\rho_{i',j'}^{(k',l')}M_{i',j'}^{(k',l')}-\eta_{i,j}^{(k,2)}\rho_{i,j}^{(k,2)}
&=\frac{\lambda}{4d(d-1)}
\Big[\Psi_{i,j}^{(3-k)}+\frac{1}{3}\Psi_{i,j}^{(k)}
+\hat{\Pi}_{i,j}^{(1)}+\frac{2}{3}\hat{\Pi}_{i,j}^{(2)}\Big]\succeq0,\nonumber\\
&\forall i,j\in\{0,1,\ldots,d-1\}~\mbox{with}~i<j,~~\forall k\in\{1,2\},\label{eq:sclt1}
\end{alignat}
where
\begin{alignat}{3}
\Psi_{i,j}^{(3)}=\frac{1}{2}(\ket{ij}+\ket{ji})(\bra{ij}+\bra{ji}),~
&\hat{\Pi}_{i,j}^{(1)}=\sum_{i'=0}^{d-1}\ketbra{i'i'}{i'i'}-\Pi_{i,j}^{(1)},
\nonumber\\
\Psi_{i,j}^{(4)}=\frac{1}{2}(\ket{ij}-\ket{ji})(\bra{ij}-\bra{ji}),~
&\hat{\Pi}_{i,j}^{(2)}=\sum_{\substack{i',j'=0\\i'\neq j'}}^{d-1}\ketbra{i'j'}{i'j'}-\Pi_{i,j}^{(2)}.
\label{eq:gbes}
\end{alignat}
\indent From the positivity in \eqref{eq:sclt1}, we have
\begin{alignat}{3}
&\sum_{i',j',k',l'}\eta_{i',j'}^{(k',l')}\rho_{i',j'}^{(k',l')}M_{i',j'}^{(k',l')}
-\eta_{i,j}^{(k,l)}\rho_{i,j}^{(k,l)}\in\mathbb{PPT}_{+}^{*}\nonumber\\
&\forall i,j\in\{0,1,\ldots,d-1\}~\mbox{with}~i<j,~~\forall k,l\in\{1,2\}.
\label{eq:ipml}
\end{alignat}
Furthermore, a straightforward calculation leads us to
\begin{alignat}{3}
&\Tr\Big[\Big(\sum_{i',j',k',l'}\eta_{i',j'}^{(k',l')}\rho_{i',j'}^{(k',l')}M_{i',j'}^{(k',l')}-\eta_{i,j}^{(k,1)}\rho_{i,j}^{(k,1)}\Big)\Psi_{i,j}^{(k)}\Big]
=-\frac{\lambda}{12d(d-1)}\nonumber\\
&\forall i,j\in\{0,1,\ldots,d-1\}~\mbox{with}~i<j,~~\forall k\in\{1,2\}.\label{eq:mwhe}
\end{alignat}
From Eq.~\eqref{eq:mwhe}, we have
\begin{alignat}{3}
&\sum_{i',j',k',l'}\eta_{i',j'}^{(k',l')}\rho_{i',j'}^{(k',l')}M_{i',j'}^{(k',l')}
-\eta_{i,j}^{(k,1)}\rho_{i,j}^{(k,1)}\not\succeq0\nonumber\\
&\forall i,j\in\{0,1,\ldots,d-1\}~\mbox{with}~i<j,~~\forall k\in\{1,2\}.
\label{eq:mkle}
\end{alignat}
From Eqs.~\eqref{eq:ipml} and \eqref{eq:mkle},
the ensemble in Eq.~\eqref{eq:exes} and the measurement in Eq.~\eqref{eq:mij} satisfy
Condition~\eqref{eq:dewc}.\\
\indent The second statement of Theorem~\ref{thm:qmsc} implies that $p_{\PPT}(\mathcal{E})$ is the expectation of the measurement in Eq.~\eqref{eq:mij} about the ensemble in Eq.~\eqref{eq:exes}:
\begin{equation}
p_{\PPT}(\mathcal{E})
=\frac{6+\lambda(2d-3)(d+2)}{12d(d-1)}.
\end{equation}
From the first statement of Theorem~\ref{thm:qmsc}, we have
\begin{equation}
p_{\PPT}(\mathcal{E})<p_{\rm G}(\mathcal{E}).
\end{equation}
\indent We also note that the measurement in Eq.~\eqref{eq:mij} is a LOCC measurement because it can be implemented by performing the same local measurement $\{\ketbra{l}{l}\}_{l=0}^{d-1}$ on each party. Thus, we have
\begin{equation}
p_{\rm L}(\mathcal{E})=p_{\PPT}(\mathcal{E})
=\frac{6+\lambda(2d-3)(d+2)}{12d(d-1)}.
\end{equation}

\subsection{Necessary and sufficient condition for $p_{\PPT}(\mathcal{E})=p_{\rm G}(\mathcal{E})$}\label{subsec:nsc}

\begin{theorem}\label{thm:qupb}
For a bipartite quantum state ensemble $\mathcal{E}=\{\eta_{i},\rho_{i}\}_{i=1}^{n}$, $p_{\PPT}(\mathcal{E})=p_{\rm G}(\mathcal{E})$ if and only if
there exists $H\in\mathbb{H}_{\PPT}(\mathcal{E})$
such that it provides $q_{\PPT}(\mathcal{E})$
but does not satisfy
\begin{equation}\label{eq:hdc}
H\in\mathbb{H}_{\DEW}(\mathcal{E}),
\end{equation}
or equivalently, there is $H\in\mathbb{H}$ satisfying Condition~\eqref{eq:psmd} and $\Tr H=q_{\PPT}(\mathcal{E})$.
\end{theorem}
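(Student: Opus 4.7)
The plan is to reduce both directions to the optimality characterization~\eqref{eq:nscfme} for $p_{\rm G}(\mathcal{E})$ together with the elementary inclusion $\mathbb{H}_{+}\subseteq\mathbb{PPT}_{+}^{*}$. As a zeroth step I would record the equivalence of the two phrasings on the right-hand side: by Eqs.~\eqref{eq:psmd}--\eqref{eq:eqhpt} and $\mathbb{H}_{+}\subseteq\mathbb{PPT}_{+}^{*}$, an $H\in\mathbb{H}_{\PPT}(\mathcal{E})\setminus\mathbb{H}_{\DEW}(\mathcal{E})$ providing $q_{\PPT}(\mathcal{E})$ is exactly an $H\in\mathbb{H}$ satisfying $H-\eta_{i}\rho_{i}\succeq0$ for all $i$ and $\Tr H=q_{\PPT}(\mathcal{E})$, so I only need to argue the equivalence with $p_{\PPT}(\mathcal{E})=p_{\rm G}(\mathcal{E})$ in this reformulated form.

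For the forward direction, I would assume $p_{\PPT}(\mathcal{E})=p_{\rm G}(\mathcal{E})$ and pick a PPT measurement $\{M_{i}\}_{i=1}^{n}$ attaining $p_{\PPT}(\mathcal{E})$; such a measurement exists by compactness of the set of PPT measurements. Because $\{M_{i}\}_{i=1}^{n}$ then also attains $p_{\rm G}(\mathcal{E})$, the optimality condition~\eqref{eq:nscfme} says that $H:=\sum_{j=1}^{n}\eta_{j}\rho_{j}M_{j}$ obeys $H-\eta_{i}\rho_{i}\succeq0$ for every $i$. The inclusion $\mathbb{H}_{+}\subseteq\mathbb{PPT}_{+}^{*}$ then puts $H$ in $\mathbb{H}_{\PPT}(\mathcal{E})\setminus\mathbb{H}_{\DEW}(\mathcal{E})$, and Theorem~\ref{thm:pptq} gives $\Tr H=\sum_{i=1}^{n}\eta_{i}\Tr(\rho_{i}M_{i})=p_{\PPT}(\mathcal{E})=q_{\PPT}(\mathcal{E})$, as required. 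For the backward direction, I would take any $H\in\mathbb{H}$ with $H-\eta_{i}\rho_{i}\succeq0$ and $\Tr H=q_{\PPT}(\mathcal{E})$ and show that $H$ upper-bounds the objective of an arbitrary (not necessarily PPT) measurement $\{N_{i}\}_{i=1}^{n}$: positivity yields $\Tr[N_{i}(H-\eta_{i}\rho_{i})]\geqslant0$ for each $i$, and summing against $\sum_{i=1}^{n}N_{i}=\mathbbm{1}$ gives
\[
\sum_{i=1}^{n}\eta_{i}\Tr(\rho_{i}N_{i})\leqslant\Tr H=q_{\PPT}(\mathcal{E})=p_{\PPT}(\mathcal{E}),
\]
where the last equality is Theorem~\ref{thm:pptq}. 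Maximizing over all measurements gives $p_{\rm G}(\mathcal{E})\leqslant p_{\PPT}(\mathcal{E})$, and the reverse inequality is~\eqref{eq:plptpg}.

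I do not anticipate a serious obstacle; the two directions are essentially dual and each rests on a single trace manipulation. The only routine point is the attainment of the max defining $p_{\PPT}(\mathcal{E})$ used in the forward direction, which follows from compactness of the PPT measurement set. Conceptually, the theorem says that the positive-semidefinite slack $H-\eta_{i}\rho_{i}\succeq0$ at the optimizer of $q_{\PPT}(\mathcal{E})$---equivalently, the absence of any DEW among these slacks---is precisely the certificate upgrading PPT-optimality to global optimality, playing for $p_{\rm G}(\mathcal{E})$ the role that $\mathbb{PPT}_{+}^{*}$-membership plays for $p_{\PPT}(\mathcal{E})$ in Theorem~\ref{thm:pptq}.
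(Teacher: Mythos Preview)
Your argument is correct. The forward direction is exactly the paper's: take a PPT measurement attaining $p_{\PPT}(\mathcal{E})$, set $H=\sum_{j}\eta_{j}\rho_{j}M_{j}$, invoke the global optimality condition~\eqref{eq:nscfme}, and compute $\Tr H=p_{\PPT}(\mathcal{E})=q_{\PPT}(\mathcal{E})$ via Theorem~\ref{thm:pptq}.

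The backward direction, however, is genuinely different. You bound $p_{\rm G}(\mathcal{E})$ directly from above by $\Tr H$ using only $H-\eta_{i}\rho_{i}\succeq0$ and $\sum_{i}N_{i}=\mathbbm{1}$, which is the standard weak-duality certificate for the unrestricted problem; then Theorem~\ref{thm:pptq} closes the gap. The paper instead fixes a PPT measurement $\{M_{i}\}_{i}$ achieving $p_{\PPT}(\mathcal{E})$, applies Theorem~\ref{thm:mnsc} to obtain $\Tr[M_{i}(H-\eta_{i}\rho_{i})]=0$, upgrades this to operator orthogonality $(H-\eta_{i}\rho_{i})M_{i}=0$ via positive semidefiniteness of both factors, and deduces $\sum_{j}\eta_{j}\rho_{j}M_{j}=H$, so that~\eqref{eq:nscfme} holds and $\{M_{i}\}_{i}$ is globally optimal. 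Your route is more elementary (it avoids Theorem~\ref{thm:mnsc} and the orthogonality step), while the paper's route yields the additional structural conclusion that every PPT-optimal measurement is in fact a global minimum-error measurement whenever the hypothesis holds.
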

\begin{proof}
Let $\{M_{i}\}_{i=1}^{n}$ be a PPT measurement providing $p_{\PPT}(\mathcal{E})$. We first suppose $p_{\PPT}(\mathcal{E})=p_{\rm G}(\mathcal{E})$ and consider 
\begin{equation}\label{eq:hjn}
H=\sum_{i=1}^{n}\eta_{i}\rho_{i}M_{i}.
\end{equation}
Since the measurement $\{M_{i}\}_{i=1}^{n}$ gives
the optimal success probability $p_{\rm G}(\mathcal{E})$ due to the assumption of $\{M_{i}\}_{i=1}^{n}$ and $p_{\PPT}(\mathcal{E})=p_{\rm G}(\mathcal{E})$,
we have from the optimality condition in Eq.~\eqref{eq:nscfme} that
\begin{alignat}{3}
H-\eta_{i}\rho_{i}
=\sum_{j=1}^{n}\eta_{j}\rho_{j}M_{j}-\eta_{i}\rho_{i}\succeq0~~\forall i=1,\ldots,n.
\label{eq:hemp}
\end{alignat}
Therefore, $H$ satisfies Condition~\eqref{eq:psmd}. Moreover, we have
\begin{equation}\label{eq:trqpt}
\Tr H=\sum_{i=1}^{n}\eta_{i}\Tr(\rho_{i}M_{i})=p_{\PPT}(\mathcal{E})=q_{\PPT}(\mathcal{E}),
\end{equation}
where the first equality is from Eq.~\eqref{eq:hjn},
the second equality follows from the assumption of $\{M_{i}\}_{i=1}^{n}$, and 
the last equality is due to Theorem~\ref{thm:pptq}.\\
\indent Conversely, let us assume $H$ is an element of $\mathbb{H}$ satisfying Condition~\eqref{eq:psmd} and $\Tr H=q_{\PPT}(\mathcal{E})$.
Thus, the positivity in \eqref{eq:eqhpt} is satisfied in terms of $H$.
For each $i=1,\ldots,n$, the positive-semidefinite operators $M_{i}$ and $H-\eta_{i}\rho_{i}$ are orthogonal since they satisfy Condition~\eqref{eq:comc} from Theorem~\ref{thm:mnsc}.\\
\indent The optimality condition in Eq.~\eqref{eq:nscfme} holds for the measurement $\{M_{i}\}_{i=1}^{n}$ because
\begin{equation}\label{eq:hetai}
\sum_{j=1}^{n}\eta_{j}\rho_{j}M_{j}-\eta_{i}\rho_{i}
=\sum_{j=1}^{n}\eta_{j}\rho_{j}M_{j}
+\sum_{k=1}^{n}(H-\eta_{k}\rho_{k})M_{k}-\eta_{i}\rho_{i}
=H-\eta_{i}\rho_{i}\succeq0~~\forall i=1,\ldots,n,
\end{equation}
where the first equality is from the orthogonality of $M_{i}$ and $H-\eta_{i}\rho_{i}$ for each $i=1,\ldots,n$ and the second equality is from $\sum_{i=1}^{n}M_{i}=\mathbbm{1}$.
Thus, we have
\begin{equation}
p_{\rm G}(\mathcal{E})
=\sum_{i=1}^{n}\eta_{i}\Tr(\rho_{i}M_{i})
=p_{\PPT}(\mathcal{E}),
\end{equation}
where the second equality is due to the assumption of $\{M_{i}\}_{i=1}^{n}$.
\end{proof}

\indent If $p_{\PPT}(\mathcal{E})=p_{\rm G}(\mathcal{E})$,
Theorem~\ref{thm:qupb} implies that there must exist $H\in\mathbb{H}_{\PPT}(\mathcal{E})\setminus\mathbb{H}_{\DEW}(\mathcal{E})$ giving $q_{\PPT}(\mathcal{E})$.
In this case, there possibly exists another Hermitian operator $H'$ satisfying
$H'\in\mathbb{H}_{\DEW}(\mathcal{E})$ and $\Tr H'= q_{\PPT}(\mathcal{E})$, 
which is illustrated in the following example.
\begin{example}\label{ex:hnu}
Let us consider the two-qubit state ensemble $\mathcal{E}=\{\eta_{i},\rho_{i}\}_{i=1}^{3}$ consisting of three pure orthogonal states with equal prior probabilities,
\begin{alignat}{3}
&\eta_{1}=\frac{1}{3},~\rho_{1}=\ketbra{0}{0}\otimes\ketbra{0}{0},\nonumber\\
&\eta_{2}=\frac{1}{3},~\rho_{2}=\ketbra{1}{1}\otimes\ketbra{1}{1},\nonumber\\
&\eta_{3}=\frac{1}{3},~\rho_{3}=\Psi_{+},\label{eq:tpose}
\end{alignat}
where 
\begin{equation}\label{eq:psipm}
\Psi_{\pm}=\ketbra{\Psi_{\pm}}{\Psi_{\pm}},~
\ket{\Psi_{\pm}}=\frac{1}{\sqrt{2}}(\ket{01}\pm\ket{10}).
\end{equation}
\end{example}
\indent Since the states $\rho_{1}$ and $\rho_{2}$ are product states, we have
\begin{equation}\label{eq:expl2}
p_{\rm L}(\mathcal{E})=1.
\end{equation}
We note that three $2\otimes2$ pure orthogonal states 
can be perfectly discriminated by a finite-round LOCC 
if and only if at least two of them are product states\cite{walg2002}.
Since $p_{\rm G}(\mathcal{E})$ is bounded above by 1, 
it follows from Inequality~\eqref{eq:plptpg} and Eq.~\eqref{eq:expl2} that
\begin{equation}\label{eq:allpeq}
p_{\rm L}(\mathcal{E})=p_{\PPT}(\mathcal{E})=p_{\rm G}(\mathcal{E})=1.
\end{equation}
Furthermore, we have
\begin{equation}\label{eq:qex2}
q_{\PPT}(\mathcal{E})=p_{\PPT}(\mathcal{E})=1,
\end{equation}
where the first equality follows from Theorem~\ref{thm:pptq}
and the second equality is due to Eq.~\eqref{eq:allpeq}.\\
\indent Now, let us consider the Hermitian operator
\begin{equation}\label{eq:hex2}
H=\frac{1}{3}(\Phi_{+}+\Phi_{-})+\frac{1+t}{6}\Psi_{+}+\frac{1-t}{6}\Psi_{-},
\end{equation}
where $\Psi_{\pm}$ is defined in Eq.~\eqref{eq:psipm}, $0\leqslant t\leqslant 1$, and
\begin{equation}
\Phi_{\pm}=\ketbra{\Phi_{\pm}}{\Phi_{\pm}},~
\ket{\Phi_{\pm}}=\frac{1}{\sqrt{2}}(\ket{00}\pm\ket{11}).
\end{equation}
Equations~\eqref{eq:qex2} and \eqref{eq:hex2} imply
\begin{equation}\label{eq:trh1}
\Tr H=1=q_{\PPT}(\mathcal{E}).
\end{equation}
A straightforward calculation leads us to
\begin{alignat}{3}
&H-\eta_{1}\rho_{1}=
\frac{1}{3}\rho_{2}+\frac{1+t}{6}\Psi_{+}+\frac{1-t}{6}\Psi_{-}\succeq0,
\nonumber\\
&H-\eta_{2}\rho_{2}=
\frac{1}{3}\rho_{1}+\frac{1+t}{6}\Psi_{+}+\frac{1-t}{6}\Psi_{-}\succeq0,
\label{eq:exhe}\\
&H-\eta_{3}\rho_{3}
=\frac{1+t}{6}(\Phi_{+}+\Phi_{-})+\frac{1-t}{3}\Phi_{-}^{\PT}\in\mathbb{PPT}_{+}^{*}.
\nonumber
\end{alignat}
Due to Eqs.~\eqref{eq:trh1} and \eqref{eq:exhe}, $H$ is an element of $\mathbb{H}_{\PPT}(\mathcal{E})$
giving $q_{\PPT}(\mathcal{E})$ regardless of $t\in[0,1]$.
Moreover, $H\in\mathbb{H}_{\DEW}(\mathcal{E})$ for all $t\in[0,1)$ because
\begin{equation}
\Tr[(H-\eta_{3}\rho_{3})\Psi_{+}]=-\frac{1-t}{6}.
\end{equation}
\indent For a bipartite quantum state ensemble $\mathcal{E}=\{\eta_{i},\rho_{i}\}_{i=1}^{n}$
where $H$ is the only element of $\mathbb{H}_{\PPT}(\mathcal{E})$ giving $q_{\PPT}(\mathcal{E})$,
Theorem~\ref{thm:qupb} tells us that $p_{\PPT}(\mathcal{E})=p_{\rm G}(\mathcal{E})$
if and only if 
there is no DEW in
$\{H-\eta_{i}\rho_{i}\}_{i=1}^{n}$.
From Corollary~\ref{cor:exer1},
$\eta_{1}\rho_{1}$ is the only element of $\mathbb{H}_{\PPT}(\mathcal{E})$ providing $q_{\PPT}(\mathcal{E})$ when Condition~\eqref{eq:exppt} holds.
Thus, we have the following corollary.

\begin{corollary}\label{cor:pql1}
For a bipartite quantum state ensemble $\mathcal{E}=\{\eta_{i},\rho_{i}\}_{i=1}^{n}$ with Condition~\eqref{eq:exppt}, $p_{\PPT}(\mathcal{E})=p_{\rm G}(\mathcal{E})$ if and only if 
there is no DEW in $\{\eta_{1}\rho_{1}-\eta_{i}\rho_{i}\}_{i=2}^{n}$.
\end{corollary}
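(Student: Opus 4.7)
The plan is to combine Corollary~\ref{cor:exer1} with the necessary-and-sufficient condition of Theorem~\ref{thm:qupb}, reducing the existential quantifier in the latter to a single check. First I would note that Condition~\eqref{eq:exppt} is exactly the hypothesis of Corollary~\ref{cor:exer1}, so under this hypothesis $\eta_{1}\rho_{1}$ is the \emph{unique} element of $\mathbb{H}_{\PPT}(\mathcal{E})$ attaining $q_{\PPT}(\mathcal{E})=p_{\PPT}(\mathcal{E})=\eta_{1}$. In particular, any $H$ giving $q_{\PPT}(\mathcal{E})$ that is required by Theorem~\ref{thm:qupb} must equal $\eta_{1}\rho_{1}$.

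Next I would apply Theorem~\ref{thm:qupb}: the equality $p_{\PPT}(\mathcal{E})=p_{\rm G}(\mathcal{E})$ holds if and only if there exists some $H\in\mathbb{H}_{\PPT}(\mathcal{E})\setminus\mathbb{H}_{\DEW}(\mathcal{E})$ with $\Tr H=q_{\PPT}(\mathcal{E})$. By the uniqueness established in the previous step, this existential condition collapses to the single requirement $\eta_{1}\rho_{1}\notin\mathbb{H}_{\DEW}(\mathcal{E})$. Unfolding the definition of $\mathbb{H}_{\DEW}(\mathcal{E})$ in Eq.~\eqref{eq:hppte}, this membership fails precisely when no operator of the form $\eta_{1}\rho_{1}-\eta_{i}\rho_{i}$ (for $i=1,\ldots,n$) is a DEW.

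Finally, I would dispose of the index $i=1$ trivially: the difference $\eta_{1}\rho_{1}-\eta_{1}\rho_{1}=0$ is positive semidefinite, hence not a DEW by Definition~\ref{def:ew}. Moreover, for each $i\geqslant 2$, Condition~\eqref{eq:exppt} already places $\eta_{1}\rho_{1}-\eta_{i}\rho_{i}$ in $\mathbb{PPT}_{+}^{*}$, so the only remaining question is whether it fails positive semidefiniteness, which is exactly the DEW condition. Hence $p_{\PPT}(\mathcal{E})=p_{\rm G}(\mathcal{E})$ if and only if no element of $\{\eta_{1}\rho_{1}-\eta_{i}\rho_{i}\}_{i=2}^{n}$ is a DEW, as claimed. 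I do not anticipate any real obstacle here: the work is purely bookkeeping once the uniqueness supplied by Corollary~\ref{cor:exer1} is recognized as the mechanism that transforms Theorem~\ref{thm:qupb} into a statement about a single, explicit family of Hermitian operators.
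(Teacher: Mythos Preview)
Your proposal is correct and follows essentially the same route as the paper's own argument, which is given in the paragraph immediately preceding the corollary: invoke the uniqueness clause of Corollary~\ref{cor:exer1} to pin down $H=\eta_{1}\rho_{1}$ as the sole minimizer, then feed that into Theorem~\ref{thm:qupb} so that the existential condition becomes a single membership test, and finally discard the trivial $i=1$ term. Your treatment is slightly more explicit (spelling out why $i=1$ drops and why the DEW condition reduces to non-PSD under Condition~\eqref{eq:exppt}), but the underlying mechanism is identical.
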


\begin{example}\label{ex:eons}
For any integer $d\geqslant 2$, let us consider the two-qu$d$it state ensemble $\mathcal{E}=\{\eta_{1},\rho_{1}\}\cup\{\eta_{i,j}^{(k)},\rho_{i,j}^{(k)}\}_{i,j,k}$ consisting of $1+2d(d-1)$ states, 
\begin{alignat}{3}
&\eta_{1}=\frac{d}{5d-4},~\rho_{1}=\frac{1}{d^{2}}\mathbbm{1},\nonumber\\
&\eta_{i,j}^{(k)}=\frac{2}{d(5d-4)},~\rho_{i,j}^{(k)}=(1-\lambda)\Psi_{i,j}^{(k)}+\frac{\lambda}{d^{2}}\mathbbm{1},\nonumber\\
&i,j\in\{0,1,\ldots,d-1\}~\mbox{with}~i<j,~k=1,2,3,4,\label{eq:exerho}
\end{alignat}
where $0\leqslant\lambda<1$ and $\Psi_{i,j}^{(k)}$ is defined in Eqs.~\eqref{eq:psijk} and \eqref{eq:gbes}.
\end{example}
\indent From a straightforward calculation, we can verify that
\begin{eqnarray}
\eta_{1}\rho_{1}-\eta_{i,j}^{(k)}\rho_{i,j}^{(k)}
&=&\frac{1}{d^{3}(5d-4)}\Big[(d^{2}-2\lambda)(\mathbbm{1}-\mathbbm{1}_{i,j})
+(d^{2}-2\lambda)\mathbbm{1}_{i,j}
-2d^{2}(1-\lambda)\Psi_{i,j}^{(k)}\Big]\nonumber\\
&=&\frac{1}{d^{3}(5d-4)}\Big[(d^{2}-2\lambda)(\mathbbm{1}-\mathbbm{1}_{i,j})
+\lambda(d^{2}-2)\mathbbm{1}_{i,j}+2d^{2}(1-\lambda)\Psi_{i,j}^{(5-k)\PT}\Big]
\in\mathbb{PPT}_{+}^{*}\nonumber\\
&&\forall i,j\in\{0,1,\ldots,d-1\}~\mbox{with}~i<j,~~\forall k\in\{1,2,3,4\}.
\label{eq:erm0}
\end{eqnarray}
where 
\begin{equation}
\mathbbm{1}_{i,j}=(\ketbra{i}{i}+\ketbra{j}{j})\otimes(\ketbra{i}{i}+\ketbra{j}{j}),
\end{equation}
and the second equality in Eq.~\eqref{eq:erm0} is due to
\begin{alignat}{3}
\frac{1}{2}\mathbbm{1}_{i,j}-\Psi_{i,j}^{(k)}=\Psi_{i,j}^{(5-k)\PT}
~\forall i,j\in\{0,1,\ldots,d-1\}~\mbox{with}~i<j,~~\forall k\in\{1,2,3,4\}.
\end{alignat}
From Corollary~\ref{cor:exer1} and Eq.~\eqref{eq:erm0}, we have
\begin{equation}\label{eq:ped}
p_{\PPT}(\mathcal{E})=\eta_{1}=\frac{d}{5d-4}.
\end{equation}
\indent For each $i,j,k$, we can easily see from the first equality in Eq.~\eqref{eq:erm0} that
$\eta_{1}\rho_{1}-\eta_{i,j}^{(k)}\rho_{i,j}^{(k)}\succeq0$ if and only if
\begin{equation}\label{eq:inr}
\Tr\Big[\Big(\eta_{1}\rho_{1}-\eta_{i,j}^{(k)}\rho_{i,j}^{(k)}\Big)\Psi_{i,j}^{(k)}\Big]
=\frac{2\lambda d^{2}-2\lambda-d^{2}}{d^{3}(5d-4)}
\geqslant0.
\end{equation}
Therefore, it follows from Corollary~\ref{cor:pql1} that $p_{\PPT}(\mathcal{E})=p_{\rm G}(\mathcal{E})$ if and only if
\begin{equation}
\lambda\geqslant\frac{d^{2}}{2(d^{2}-1)}.
\end{equation}

\section{Discussion}\label{sec:dis}
We have considered bipartite quantum state discrimination and shown that the minimum-error discrimination of bipartite quantum states using PPT measurements strongly depends on the existence of DEW.
We have established conditions on minimum-error discrimination by PPT measurements, that is, $p_{\PPT}(\mathcal{E})=p_{\rm G}(\mathcal{E})$, in terms of DEW(Theorems~\ref{thm:qmsc} and \ref{thm:qupb}). 
We have also provided conditions on the maximum success probability over all possible PPT measurements(Theorems~\ref{thm:pptq} and \ref{thm:mnsc}). 
Our results have been illustrated by examples of multidimensional bipartite quantum states.\\
\indent Quantum nonlocality is a genuine phenomenon of multipartite quantum systems without any classical counterpart.
Quantum nonlocality is a key ingredient making quantum states outperform the classical ones in various quantum information processing tasks such as quantum teleportation and quantum cryptography\cite{eker1991,benn1993}. 
It is also known that this quantum nonlocality plays an important role in quantum algorithms which are more powerful than any classical ones\cite{deut1992,shor1994}. 
Quantum nonlocality also occurs in discriminating multipartite quantum states;
in discriminating states from a bipartite quantum state ensemble $\mathcal{E}$, quantum nonlocality occurs when optimal state discrimination cannot be realized only by LOCC measurement, that is, $p_{\rm L}(\mathcal{E})<p_{\rm G}(\mathcal{E})$.\\
\indent Our results here can be used to detect the occurrence of nonlocality in quantum state discrimination.
Violation of the condition in Theorem~\ref{thm:qmsc} or \ref{thm:qupb} means $p_{\PPT}(\mathcal{E})<p_{\rm G}(\mathcal{E})$,
which consequently implies $p_{\rm L}(\mathcal{E})<p_{\rm G}(\mathcal{E})$.
Thus, Condition~\eqref{eq:dewc} of Theorem~\ref{thm:qmsc} and
Condition~\eqref{eq:hdc} of Theorem~\ref{thm:qupb} can be used as
sufficient conditions on $p_{\rm L}(\mathcal{E})<p_{\rm G}(\mathcal{E})$.\\
\indent Corollary~\ref{cor:pql1} provides a useful method to make a bipartite quantum state ensemble $\mathcal{E}=\{\eta_{i},\rho_{i}\}_{i=1}^{n}$ showing nonlocality, that is, $p_{\rm L}(\mathcal{E})<p_{\rm G}(\mathcal{E})$, by means of DEW: For a given DEW $W$, let us consider
the bipartite quantum state ensemble $\mathcal{E}=\{\eta_{i},\rho_{i}\}_{i=1}^{2}$ where 
\begin{alignat}{3}\label{eq:dfex}
&\eta_{1}=\frac{\Tr(P+W)}{\Tr(2P+W)},~
\rho_{1}=\frac{P+W}{\Tr(P+W)},
\nonumber\\
&\eta_{2}=\frac{\Tr P}{\Tr(2P+W)},~
\rho_{2}=\frac{P}{\Tr P},
\end{alignat}
with any positive-semidefinite operator $P$ satisfying 
\begin{equation}
P+W\succeq0.
\end{equation}
\indent Since $\eta_{1}\rho_{1}-\eta_{2}\rho_{2}$
is proportional to the DEW $W$, $p_{\PPT}(\mathcal{E})<p_{\rm G}(\mathcal{E})$ holds from
Corollary~\ref{cor:pql1}.
Thus, Inequality~\eqref{eq:plptpg} leads us to
$p_{\rm L}(\mathcal{E})<p_{\rm G}(\mathcal{E})$.
In other words, Corollary~\ref{cor:pql1} provides a systematic way to construct 
a bipartite quantum state ensemble showing nonlocality in quantum state discrimination.\\
\indent Corollary~\ref{cor:pql1} can also be used to construct 
a bipartite quantum state ensemble $\mathcal{E}=\{\eta_{i},\rho_{i}\}_{i=1}^{n}$ with $n>2$ showing nonlocality in quantum state discrimination. For a set of DEWs $\{W_{i}\}_{i=2}^{n}$,
let us consider the bipartite quantum state ensemble $\mathcal{E}=\{\eta_{i},\rho_{i}\}_{i=1}^{n}$ where
\begin{alignat}{3}
&\eta_{1}=\frac{\Tr\mathbbm{1}}{\Tr(n\mathbbm{1}-\sum_{j=2}^{n}\lambda_{j}W_{j})},~
\rho_{1}=\frac{\mathbbm{1}}{\Tr\mathbbm{1}},
\nonumber\\
&\eta_{i}=\frac{\Tr(\mathbbm{1}-\lambda_{i}W_{i})}{\Tr(n\mathbbm{1}-\sum_{j=2}^{n}\lambda_{j}W_{j})},~
\rho_{i}=\frac{\mathbbm{1}-\lambda_{i}W_{i}}{\Tr(\mathbbm{1}-\lambda_{i}W_{i})},
~i=2,\ldots,n,\label{eq:dexd}
\end{alignat}
with any set of positive real numbers $\{\lambda_{i}\}_{i=2}^{n}$ satisfying 
\begin{equation}
\mathbbm{1}-\lambda_{i}W_{i}\succeq0~~\forall i=2,\ldots,n.
\end{equation}
Because $\eta_{1}\rho_{1}-\eta_{i}\rho_{i}$ is proportional to $W_{i}$ for any $i\in\{2,\ldots,n\}$,
$p_{\PPT}(\mathcal{E})<p_{\rm G}(\mathcal{E})$ holds from
Corollary~\ref{cor:pql1}.
Thus, Inequality~\eqref{eq:plptpg} leads us to
$p_{\rm L}(\mathcal{E})<p_{\rm G}(\mathcal{E})$.\\
\indent As our results establish a specific relation between the properties of EW and PPT measurements,
it is natural to investigate the relationship between EW and other measurements such as the set of all separable measurements.
It is also an interesting future work to investigate if EW can be used for the optimization of other state discrimination strategies besides minimum-error discrimination.\\ 

This work was supported by Basic Science Research Program(Grant No. NRF-2020R1F1A1A010501270) and Quantum Computing Technology Development Program(Grant No. NRF-2020M3E4A1080088) through a National Research Foundation of Korea grant funded by the Korea government(Ministry of Science and 
Information \& Communications Technology(ICT)).

\appendix
\section{Proof of Theorem~\ref{thm:pptq}}\label{app:thm1}
As we already have Inequality~\eqref{eq:pptq}, it is enough to show that 
\begin{equation}\label{eq:pqgt}
p_{\PPT}(\mathcal{E})\geqslant q_{\PPT}(\mathcal{E}).
\end{equation}

\begin{lemma}\label{lem:ppti}
If $E\in\mathbb{PPT}_{+}^{*}$ and $E\neq 0_{\mathbb{H}}$, then $\Tr E>0$, where $0_{\mathbb{H}}$ is the zero operator in $\mathbb{H}$.
\end{lemma}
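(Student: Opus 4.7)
The plan is to exploit the fact that the identity operator $\mathbbm{1}$ lies in the interior of $\mathbb{PPT}_{+}$ relative to $\mathbb{H}$, and then use the defining inequality of the dual cone $\mathbb{PPT}_{+}^{*}$ together with non-degeneracy of the Hilbert-Schmidt inner product.

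First I would observe that $\mathbbm{1}\in\mathbb{PPT}_{+}$, since $\mathbbm{1}\succeq 0$ and $\mathbbm{1}^{\PT}=\mathbbm{1}\succeq 0$. Therefore, for any $E\in\mathbb{PPT}_{+}^{*}$, taking $F=\mathbbm{1}$ in the definition \eqref{eq:pptpsd} immediately gives $\Tr E=\Tr(E\mathbbm{1})\geqslant 0$. So the lemma reduces to showing that $\Tr E=0$ forces $E=0_{\mathbb{H}}$.

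Next I would show that for every Hermitian $G\in\mathbb{H}$ there exists $\epsilon_{0}>0$ such that $\mathbbm{1}+\epsilon G\in\mathbb{PPT}_{+}$ for all $0<\epsilon<\epsilon_{0}$. Indeed, for $\epsilon<1/\max(\|G\|,\|G^{\PT}\|)$ both $\mathbbm{1}+\epsilon G\succeq 0$ and $(\mathbbm{1}+\epsilon G)^{\PT}=\mathbbm{1}+\epsilon G^{\PT}\succeq 0$ hold. Applying the dual-cone condition to $F=\mathbbm{1}\pm\epsilon G$ yields $\Tr E\pm\epsilon\Tr(EG)\geqslant 0$, and under the assumption $\Tr E=0$ this gives $\Tr(EG)=0$ for every Hermitian $G\in\mathbb{H}$.

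Finally, I would invoke the non-degeneracy of the Hilbert-Schmidt inner product on $\mathbb{H}$: if $\Tr(EG)=0$ for all $G\in\mathbb{H}$ (in particular for $G=E$, giving $\Tr E^{2}=0$), then $E=0_{\mathbb{H}}$. This contradicts $E\neq 0_{\mathbb{H}}$, and hence $\Tr E>0$. The argument is essentially topological (identity is an interior point of the cone) combined with linear-algebra non-degeneracy, and I do not anticipate any real obstacle beyond bookkeeping the $\epsilon$-estimate; the only subtlety is to notice that the partial transpose is bounded in operator norm, so the same $\epsilon_{0}$ works for both positivity conditions simultaneously.
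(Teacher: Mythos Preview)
Your proof is correct and takes a genuinely different route from the paper's. Both arguments first observe $\Tr E\geqslant 0$ from $\mathbbm{1}\in\mathbb{PPT}_{+}$ and then show that $\Tr E=0$ forces $E=0_{\mathbb{H}}$, but they reach this conclusion differently. The paper pairs $E$ against rank-one product projectors: since these lie in $\mathbb{PPT}_{+}$ and any orthonormal product basis sums to $\mathbbm{1}$, the hypothesis $\Tr E=0$ gives $\langle e|E|e\rangle=0$ for every product vector $|e\rangle$, hence $\Tr(EF)=0$ for all $F\in\mathbb{SEP}$, and finally $E=0_{\mathbb{H}}$ because $\mathbb{SEP}$ spans $\mathbb{H}$. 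You instead use the purely convex-analytic fact that $\mathbbm{1}$ is an \emph{interior} point of $\mathbb{PPT}_{+}$, so the perturbations $\mathbbm{1}\pm\epsilon G$ remain in the cone and immediately force $\Tr(EG)=0$ for every $G\in\mathbb{H}$. Your approach is cleaner and more general---it applies verbatim to the dual of any closed convex cone with nonempty interior, with no appeal to the bipartite structure or to the fact that $\mathbb{SEP}$ spans $\mathbb{H}$---whereas the paper's argument is more concrete and keeps the tensor-product structure in view.
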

\begin{proof}
The proof is by contradiction.
We first note that $\Tr E=\Tr(\mathbbm{1}E)\geqslant0$ because $\mathbbm{1}\in\mathbb{PPT}_{+}$ and $E\in\mathbb{PPT}_{+}^{*}$.
Thus, let us suppose $\Tr E=0$.\\
\indent For an arbitrary orthonormal product basis $\{|e_{i}\rangle\}_{i=1}^{D}$ of the bipartite Hilbert space $\mathcal{H}=\mathbb{C}^{d_{1}}\otimes\mathbb{C}^{d_{2}}$, we have
\begin{equation}\label{eq:tree2}
\sum_{i=1}^{D}\Tr(E|e_{i}\rangle\!\langle e_{i}|)=\Tr(E\mathbbm{1})
=\Tr E=0,
\end{equation}
where $D=d_{1}d_{2}$.
From $E\in\mathbb{PPT}_{+}^{*}$ and $|e_{i}\rangle\!\langle e_{i}|\in\mathbb{PPT}_{+}$ for all $i=1,\ldots,D$, we have
\begin{equation}\label{eq:tree1}
\Tr(E|e_{i}\rangle\!\langle e_{i}|)\geqslant0~~\forall i=1,\ldots,D.
\end{equation}
Equation~\eqref{eq:tree2} and Inequality~\eqref{eq:tree1} lead us to
\begin{equation}
\Tr(E|e_{i}\rangle\!\langle e_{i}|)=0
~~\forall i=1,\ldots,D.
\end{equation}
Since the choice of $\{|e_{i}\rangle\}_{i=1}^{D}$ can be arbitrary, 
\begin{equation}
\Tr(E|e\rangle\!\langle e|)=0
\end{equation}
for any product vector $|e\rangle\in\mathcal{H}$, therefore
\begin{equation}\label{eq:treff}
\Tr(EF)=0~~\forall F\in\mathbb{SEP},
\end{equation}
where $\mathbb{SEP}$ is defined in Eq.~\eqref{eq:sepdef}.\\
\indent We note that $\mathbb{SEP}$ spans $\mathbb{H}$. To see this, we first note that the set of all positive-semidefinite operators on $\mathbb{C}^{d_{k}}$ spans the set of all Hermitian operators on $\mathbb{C}^{d_{k}}$ for each $k=1,2$.
Moreover, every $F\in\mathbb{H}$ can be represented as a summation of product Hermitian operators, 
\begin{equation}
F=\sum_{l} A_{l}\otimes B_{l},
\end{equation}
where $A_{l}$ and $B_{l}$ are Hermitian operators on $\mathbb{C}^{d_{1}}$ and $\mathbb{C}^{d_{2}}$, respectively.
Therefore, Eq.~\eqref{eq:treff} leads us to 
\begin{equation}\label{eq:trefz}
\Tr(EF)=0~~\forall F\in\mathbb{H}.
\end{equation}
Equation~\eqref{eq:trefz} means $E=0_{\mathbb{H}}$ which contradicts the assumption of the lemma. Thus, $\Tr E>0$.
\end{proof}

\indent Let us consider the set
\begin{alignat}{3}
\mathcal{S}(\mathcal{E})=
\Big\{\big(\sum_{i=1}^{n}\eta_{i}\Tr(\rho_{i}M_{i})-p,\mathbbm{1}-\sum_{i=1}^{n}M_{i}\big)\in\mathbb{R}\times\mathbb{H}\,\Big|\, p>p_{\rm PPT}(\mathcal{E}),~
M_{i}\in\mathbb{PPT}_{+}~\forall i=1,\ldots,n\Big\},
\end{alignat}
where $\mathbb{R}$ is the set of all real numbers.
We note that $\mathcal{S}(\mathcal{E})$ is 
a convex set due to the convexity of $\mathbb{PPT}_{+}$ in Eq.~\eqref{def:pptpd}.
Moreover, $\mathcal{S}(\mathcal{E})$ does not have the origin $(0,0_{\mathbb{H}})$ of $\mathbb{R}\times\mathbb{H}$, 
otherwise there is a PPT measurement $\{M_{i}\}_{i=1}^{n}$ with 
\begin{equation}
\sum_{i=1}^{n}\eta_{i}\Tr(\rho_{i}M_{i})>p_{\PPT}(\mathcal{E}), 
\end{equation}
and this contradicts the optimality of $p_{\PPT}(\mathcal{E})$ in Eq.~\eqref{eq:pptdef}.
We also note that the Cartesian product $\mathbb{R}\times\mathbb{H}$
can be considered as a real vector space with an inner product defined as
\begin{equation}\label{eq:inpde}
\langle (a,A),(b,B)\rangle=ab+\Tr(AB)
\end{equation}
for $(a,A),(b,B)\in\mathbb{R}\times\mathbb{H}$.\\
\indent Since $\mathcal{S}(\mathcal{E})$ and the single-element set $\{(0,0_{\mathbb{H}})\}$ are disjoint convex sets,
it follows from the separating hyperplane theorem\cite{boyd2004,sht} that 
there is $(\gamma,\Gamma)\in\mathbb{R}\times\mathbb{H}$ satisfying
\begin{alignat}{1}
(\gamma,\Gamma)\neq(0,0_{\mathbb{H}}),\label{eq:gcgneq}\\
\langle(\gamma,\Gamma),(r,G)\rangle\leqslant0~~\forall(r,G)\in\mathcal{S}(\mathcal{E}).
\label{eq:rcgleq}
\end{alignat}
\indent Suppose 
\begin{alignat}{1}
\Tr\Gamma\leqslant\gamma p_{\PPT}(\mathcal{E}),\label{eq:sca1}\\
\Gamma-\gamma\eta_{i}\rho_{i}\in\mathbb{PPT}_{+}^{*}~\forall i=1,\ldots,n,\label{eq:sca2}\\
\gamma>0.\label{eq:sca3}
\end{alignat}
From Conditions~\eqref{eq:sca2} and \eqref{eq:sca3}, the Hermitian operator
$H=\Gamma/\gamma$ is an element of $\mathbb{H}_{\PPT}(\mathcal{E})$ in Eq.~\eqref{eq:hpts}.
Thus, the definition of $q_{\PPT}(\mathcal{E})$ in Eq.~\eqref{eq:qptdef} leads us to
\begin{equation}\label{eq:qptlet}
q_{\PPT}(\mathcal{E})\leqslant\Tr H.
\end{equation}
Moreover, Condition~\eqref{eq:sca1} implies
\begin{equation}\label{eq:thleqp}
\Tr H\leqslant p_{\PPT}(\mathcal{E}).
\end{equation}
Inequalities~\eqref{eq:qptlet} and \eqref{eq:thleqp} complete the proof of Inequality~\eqref{eq:pqgt}.\\
\indent The rest of this section is to prove Conditions~\eqref{eq:sca1}, \eqref{eq:sca2} and \eqref{eq:sca3}.
\begin{proof}[Proof of \eqref{eq:sca1}]
From Eq.~\eqref{eq:inpde},
Inequality~\eqref{eq:rcgleq} can be rewritten as
\begin{equation}\label{eq:trgmle}
\Tr \Gamma-\sum_{i=1}^{n}\Tr[M_{i}(\Gamma-\gamma\eta_{i}\rho_{i})]\leqslant \gamma p 
\end{equation}
for all $p>p_{\PPT}(\mathcal{E})$ and all $\{M_{i}\}_{i=1}^{n}\subseteq\mathbb{PPT}_{+}$. 
If $M_{i}=0_{\mathbb{H}}$ for all $i=1,\ldots,n$,  
Inequality~\eqref{eq:trgmle} becomes Inequality~\eqref{eq:sca1}
by taking the limit of $p$ to $p_{\PPT}(\mathcal{E})$.
\end{proof}
\begin{proof}[Proof of \eqref{eq:sca2}]
For each $j\in\{1,\ldots,n\}$,
let us consider an arbitrary $M_{j}\in\mathbb{PPT}_{+}$ 
and $M_{i}=0_{\mathbb{H}}$ for all $i=1,\ldots,n$ with $i\neq j$. 
In this case,
$\{M_{i}\}_{i=1}^{n}$ is clearly a subset of $\mathbb{PPT}_{+}$,
and Inequality~\eqref{eq:trgmle} becomes
\begin{equation}\label{eq:trmjle}
\Tr\Gamma-\Tr[M_{j}(\Gamma-\gamma\eta_{j}\rho_{j})]\leqslant\gamma p_{\PPT}(\mathcal{E})
\end{equation}
by taking the limit of $p$ to $p_{\PPT}(\mathcal{E})$.\\
\indent Suppose $\Gamma-\gamma\eta_{j}\rho_{j}\notin\mathbb{PPT}_{+}^{*}$,
then there is $M\in\mathbb{PPT}_{+}$ with $\Tr[M(\Gamma-\gamma\eta_{j}\rho_{j})]<0$. 
We note that $M\in\mathbb{PPT}_{+}$ implies $tM\in\mathbb{PPT}_{+}$ for any $t>0$. 
Thus, $\{M_{i}\}_{i=1}^{n}$ consisting of 
$M_{j}=tM$ for $t>0$ and $M_{i}=0$ for all $i=1,\ldots,n$ with $i\neq j$ 
is also a subset of $\mathbb{PPT}_{+}$.\\
\indent Now, Inequality~\eqref{eq:trmjle}
can be rewritten as
\begin{equation}\label{eq:trtmle}
\Tr\Gamma-\Tr[tM(\Gamma-\gamma\eta_{j}\rho_{j})]\leqslant\gamma p_{\PPT}(\mathcal{E}).
\end{equation}
Since Inequality~\eqref{eq:trtmle} is true for arbitrary large $t>0$, 
$\gamma p_{\PPT}(\mathcal{E})$ can also be arbitrarily large.
However, this contradicts that both  $\gamma$ and $p_{\PPT}(\mathcal{E})$ are finite.
Thus, $\Gamma-\gamma\eta_{j}\rho_{j}\in\mathbb{PPT}_{+}^{*}$, 
which completes the proof of \eqref{eq:sca2}.
\end{proof}
\begin{proof}[Proof of \eqref{eq:sca3}]
To show $\gamma\geqslant0$, we assume $\gamma<0$.
Let us consider $\{M_{i}\}_{i=1}^{n}$ with
$M_{i}=0_{\mathbb{H}}$ for all $i=1,\ldots,n$.
Since $\{M_{i}\}_{i=1}^{n}\subseteq\mathbb{PPT}_{+}$,
Inequality~\eqref{eq:trgmle} becomes
\begin{equation}
\Tr\Gamma\leqslant-\infty
\end{equation}
by taking the limit of $p$ to $\infty$.
This contradicts that $\Gamma$ is bounded. Thus $\gamma\geqslant0$.\\
\indent Now, let us suppose $\gamma=0$.
In this case, Conditions \eqref{eq:sca1} and \eqref{eq:sca2} become
\begin{equation}\label{eq:tglpp}
\Tr\Gamma\leqslant0,~~\Gamma\in\mathbb{PPT}_{+}^{*}.
\end{equation}
From Lemma~\ref{lem:ppti} together with Condition~\eqref{eq:tglpp}, we have 
\begin{equation}
\Gamma=0_{\mathbb{H}},
\end{equation}
which contradicts Condition~\eqref{eq:gcgneq}. Thus, $\gamma>0$.
\end{proof}


\end{document}